\newcounter{qst}
\crefname{qst}{Question}{Questions}
\patchcmd\algocf@Vline{\vrule}{\vrule \kern-0.4pt}{}{}
\patchcmd\algocf@Vsline{\vrule}{\vrule \kern-0.4pt}{}{}
\definecolor{darkgrey}{gray}{0.3}
\definecolor{commentcolor}{gray}{0.5}
\crefname{algocf}{Algorithm}{Algorithms}
\let\cref@old@stepcounter\stepcounter
\def\stepcounter#1{%
  \cref@old@stepcounter{#1}%
  \cref@constructprefix{#1}{\cref@result}%
  \@ifundefined{cref@#1@alias}%
    {\def\@tempa{#1}}%
    {\def\@tempa{\csname cref@#1@alias\endcsname}}%
  \protected@edef\cref@currentlabel{%
    [\@tempa][\arabic{#1}][\cref@result]%
    \csname p@#1\endcsname\csname the#1\endcsname}}
\newcommand{\declarecolor}[2]{\definecolor{#1}{RGB}{#2}\expandafter\newcommand\csname #1\endcsname[1]{\textcolor{#1}{##1}}}
\theoremstyle{plain}
\newtheorem{theorem}{Theorem}[section]
\newtheorem{corollary}[theorem]{Corollary}
\newtheorem{proposition}[theorem]{Proposition}
\theoremstyle{definition}
\newtheorem{definition}[theorem]{Definition}
\theoremstyle{remark}
\newtheorem{remark}[theorem]{Remark}
\let\E\relax
\newcommand*{\E}{{\mathbb{E}}}
\newcommand*{\N}{{\mathbb{N}}}
\let\R\relax
\newcommand*{\R}{{\mathbb{R}}}
\newcommand*{\cX}{{\mathcal{X}}}
\newcommand*{\cY}{{\mathcal{Y}}}
\newcommand*{\cR}{{\mathcal{R}}}
\newcommand*{\cA}{{\mathcal{A}}}
\newcommand{\defeq}{\coloneqq}
\let\poly\relax
\let\polylog\relax
\DeclareMathOperator{\poly}{poly}
\DeclareMathOperator{\reg}{Reg}
\DeclareMathOperator{\proj}{\Pi}
\DeclareMathOperator{\op}{\mathrm{op}}
\DeclareMathOperator{\polylog}{polylog}
\DeclarePairedDelimiterX{\infdivx}[2]{(}{)}{%
  #1\;\delimsize\|\;#2%
}
\newcommand{\SW}{\textsc{SW}}
\newcommand{\diamset}[1]{\Omega_{#1}}
\newcommand{\diamreg}[1]{\Omega_{#1}}
\newcommand{\normset}[1]{\|#1\|}
\newcommand{\ppad}{\ComplexityFont{PPAD}}
\newcommand{\hatx}{\widehat{\vec{x}}}
\newcommand{\haty}{\widehat{\vec{y}}}
\NewDocumentCommand{\treeset}{o}{\mathbb{T}\IfNoValueF{#1}{_{#1}}}
\newcommand{\rvu}{\texttt{RVU}}
    \newcommand*\patchAmsMathEnvironmentForLineno[1]{%
      \expandafter\let\csname old#1\expandafter\endcsname\csname #1\endcsname
      \expandafter\let\csname oldend#1\expandafter\endcsname\csname end#1\endcsname
      \renewenvironment{#1}%
                       {\linenomath\csname old#1\endcsname}%
                       {\csname oldend#1\endcsname\endlinenomath}%
    }%
    \newcommand*\patchBothAmsMathEnvironmentsForLineno[1]{%
      \patchAmsMathEnvironmentForLineno{#1}%
      \patchAmsMathEnvironmentForLineno{#1*}%
    }%
\renewcommand{\vec}[1]{\bm{#1}}
\newcommand{\mat}[1]{\mathbf{#1}}
\newcommand{\range}[1]{[\![#1]\!]}
\DeclarePairedDelimiterX{\card}[1]{\lvert}{\rvert}{#1}
\DeclarePairedDelimiterX{\abs}[1]{\lvert}{\rvert}{#1}
\DeclarePairedDelimiterX{\norm}[1]{\lVert}{\rVert}{#1}
\DeclarePairedDelimiterX{\tuple}[1]{\lparen}{\rparen}{#1}
\DeclarePairedDelimiterX{\parens}[1]{\lparen}{\rparen}{#1}
\DeclarePairedDelimiterX{\brackets}[1]{\lbrack}{\rbrack}{#1}
\DeclarePairedDelimiterX{\set}[1]\{\}{#1}
\let\Pr\relax
\DeclarePairedDelimiterXPP{\Pr}[1]{\mathbb{P}}[]{}{#1}
\DeclarePairedDelimiterXPP{\PrX}[2]{\mathbb{P}_{#1}}[]{}{#2}
\DeclarePairedDelimiterXPP{\Ex}[1]{\mathbb{E}}[]{}{#1}
\DeclarePairedDelimiterXPP{\ExX}[2]{\mathbb{E}_{#1}}[]{}{#2}
\tikzset{
  fitting node/.style={
    inner sep=0pt,
    fill=none,
    draw=none,
    reset transform,
    fit={(\pgf@pathminx,\pgf@pathminy) (\pgf@pathmaxx,\pgf@pathmaxy)}
  },
  reset transform/.code={\pgftransformreset}
}
\tikzset{cross/.style={path picture={
  \draw[black]
(path picture bounding box.south east) -- (path picture bounding box.north west) (path picture bounding box.south west) -- (path picture bounding box.north east);
}}}
\tikzstyle{ox}=[semithick,draw=black,circle,cross,inner sep=1.2mm]
\newcommand{\nc}{\newcommand}
\nc\noah[1]{\ifnum\Comments=1 {\textcolor{purple}{[ng: #1]}}\fi}
\nc\maxfish[1]{\ifnum\Comments=1{\textcolor{blue}{[mf: #1]}}\fi}
\nc\costis[1]{\ifnum\Comments=1{\textcolor{brown}{[cd: #1]}}\fi}
\nc\costiss[1]{\textcolor{red}{#1}}
\nc\io[1]{\ifnum\Comments=1 {\textcolor{purple}{[ioannis: #1]}}\fi}
\nc{\Opthedge}{OMWU\xspace}
\nc{\DMO}{\DeclareMathOperator}
\nc\old[1]{\textcolor{brown}{[old: #1]}}
\nc{\BR}{\mathbb{R}}
\nc{\BC}{\mathbb{C}}
\DMO{\Bin}{Bin}
\nc{\BN}{\mathbb{N}}
\nc{\distrs}[1]{\Delta({#1})}
\nc{\BZ}{\mathbb{Z}}
\nc{\ep}{\epsilon}
\nc{\ra}{\rightarrow}
\nc{\st}{\star}
\nc{\Reg}[2]{\REG_{{#1},{#2}}}
\nc{\til}{\tilde}
\nc{\kld}[2]{\KL({#1};{#2})}
\nc{\chisq}[2]{\chi^2({#1};{#2})}
\DMO{\POLYLOG}{polylog}
\nc{\matx}[1]{\left(\begin{matrix}#1\end{matrix}\right)}
\DMO{\VAR}{Var}
\DMO{\COV}{Cov}
\nc{\Var}[2]{\VAR_{{#1}}\left({#2}\right)}
\nc{\Cov}[3]{\COV_{{#1}}\left({#2},{#3}\right)}
\DMO{\DD}{D}
\nc{\fd}[2]{\DD_{#1}{#2}}
\nc{\fds}[3]{\left(\fd{#1}{#2}\right)\^{#3}}
\nc{\fdc}[2]{\DD^\circ_{#1}{#2}}
\nc{\fdcs}[3]{\left(\fdc{#1}{#2}\right)\^{#3}}
\nc{\shf}[2]{\EEE_{#1}{#2}}
\nc{\shfs}[3]{\left(\shf{#1}{#2}\right)\^{#3}}
\nc{\normst}[2]{\left\| {#2} \right\|_{#1}^\st}
\renewcommand{\^}[1]{^{(#1)}}
\DeclareMathOperator*{\argmax}{arg\,max}
\DeclareMathOperator*{\argmin}{arg\,min}
\nc{\grad}{\nabla}
\nc{\lng}{\langle}
\nc{\rng}{\rangle}
\nc{\bbone}{\mathbf{1}}
\nc{\bbzero}{\mathbf{0}}
\nc{\MD}{\mathcal{D}}
\nc{\MM}{\mathcal{M}}
\nc{\MZ}{\mathcal{Z}}
\nc{\MU}{\mathcal{U}}
\nc{\MC}{\mathcal{C}}
\nc{\MT}{\mathbb{T}^{n}}
\nc{\MS}{\mathcal{S}}
\nc{\MX}{\mathcal{X}}
\nc{\MY}{\mathcal{Y}}
\nc{\MB}{\mathcal{B}}
\nc{\MJ}{\mathcal{J}}
\nc{\MF}{\mathcal{F}}
\nc{\MG}{\mathcal{G}}
\nc{\MR}{\mathcal{R}}
\nc{\ML}{\mathcal{L}}
\nc{\MQ}{\mathcal{Q}}
\nc{\ba}{\mathbf{A}}
\nc{\bx}{\mathbf{x}}
\nc{\by}{\mathbf{y}}
\nc{\bz}{\mathbf{z}}
\nc{\bs}{\mathbf{s}}
\nc{\bt}{\mathbf{t}}
\nc{\br}{\mathbf{r}}
\nc{\ME}{\mathcal{E}}
\DMO{\View}{View}
\DMO{\KL}{KL}
\nc{\MW}{\mathcal{W}}
\nc{\CS}{\mathscr{S}}
\nc{\CI}{\mathscr{I}}
\nc{\CQ}{\mathscr{Q}}
\nc{\CL}{\mathscr{L}}
\nc{\CM}{\mathscr{M}}
\nc{\CG}{\mathscr{G}}
\nc{\CR}{\mathscr{R}}
\nc{\wh}{\widehat}
\nc{\BM}{BM\xspace}
\nc{\ALG}{\texttt{ALG}}
\nc{\MCT}{{\rm MCT}}
\nc{\matrixLL}{\mat{L}}
\nc{\vectorecks}{\vec{x}}
\nc{\vectorLL}{\vec{\ell}}
\nc{\matrixKYU}{\mat{Q}}
\nc{\rowdot}{\cdot}
\title{Optimistic Mirror Descent Either Converges to Nash or to Strong Coarse Correlated Equilibria in Bimatrix Games}
\author[1]{Ioannis Anagnostides}
\author[2]{Gabriele Farina}
\author[3]{Ioannis Panageas}
\author[4]{Tuomas Sandholm}
\affil[1,2,4]{Carnegie Mellon University}
\affil[3]{University of California Irvine}
\affil[4]{Strategy Robot, Inc.}
\affil[4]{Optimized Markets, Inc.}
\affil[4]{Strategic Machine, Inc.}
\affil[ ]{\texttt {\{ianagnos,gfarina,sandholm\}@cs.cmu.edu}, and \texttt{ipanagea@ics.uci.edu}}
\date{}
\begin{document}

\maketitle

\pagenumbering{gobble}

\begin{abstract}
    We show that, for any sufficiently small fixed $\epsilon > 0$, when both players in a \emph{general-sum two-player (bimatrix) game} employ \emph{optimistic mirror descent (OMD)} with smooth regularization, learning rate $\eta = O(\epsilon^2)$ and $T = \Omega(\poly(1/\epsilon))$ repetitions, either the dynamics reach an \emph{$\epsilon$-approximate Nash equilibrium (NE)}, or the average correlated distribution of play is an \emph{$\Omega(\poly(\epsilon))$-strong coarse correlated equilibrium (CCE)}: any possible unilateral deviation does not only leave the player worse, but will decrease its utility by $\Omega(\poly(\epsilon))$. As an immediate consequence, when the iterates of OMD are bounded away from being Nash equilibria in a bimatrix game, we guarantee convergence to an \emph{exact} CCE after only $O(1)$ iterations. Our results reveal that uncoupled no-regret learning algorithms can converge to CCE in general-sum games remarkably faster than to NE in, for example, zero-sum games. To establish this, we show that when OMD does not reach arbitrarily close to a NE, the \emph{(cumulative) regret} of \emph{both} players is not only \emph{negative}, but \emph{decays linearly} with time. Given that regret is the canonical measure of performance in online learning, our results suggest that cycling behavior of no-regret learning algorithms in games can be justified in terms of efficiency.
\end{abstract}

\clearpage
\pagenumbering{arabic}

\section{Introduction}

The last few years have seen a tremendous amount of progress in computational game solving, witnessed over a series of breakthrough results on benchmark applications in AI~\citep{Bowling15:Heads,Brown17:Superhuman,Moravvcik17:DeepStack,Silver16:Mastering,Vinyals19:Grandmaster}. Most of these advances rely on algorithms for approximating a \emph{Nash equilibrium (NE)}~\citep{Nash50:Equilibrium} in \emph{two-player zero-sum games}. Indeed, in that regime it is by now well-understood how to compute a NE at scale. However, many real-world interactions are not zero-sum, but instead have \emph{general-sum utilities}, and often more than two players. In such settings, Nash equilibria suffer from several drawbacks. First, finding even an approximate NE is \emph{computationally intractable}~\citep{Daskalakis08:Complexity,Etessami07:Complexity,Chen09:Settling,Rubinstein16:Settling}---subject to well-believed complexity-theoretic assumptions. Furthermore, even if we were to reach one, NE outcomes can be dramatically more inefficient in terms of the \emph{social welfare} compared to other more permissive equilibrium concepts~\citep{Moulin78:Strategically}. Finally, NE suffer from \emph{equilibrium selection} issues: there can be a multitude of equilibria, and an equilibrium strategy may perform poorly against the ``wrong'' equilibrium strategies~\citep{Harsanyi88:General,Harsanyi95:A,Matsui95:Approach}, thereby necessitating some form of communication between the players.

A competing notion of rationality is Aumann's concept of \emph{correlated equilibrium (CE)}~\citep{Aumann74:Subjectivity}, generalizing Nash's original concept. Unlike NE, a correlated equilibrium can be computed \emph{exactly} in polynomial time~\citep{Papadimitriou08:Computing,Jiang15:Polynomial}. Further, CE can arise from simple \emph{uncoupled learning dynamics}, overcoming the often unreasonable assumption that players have perfect knowledge over the game's utilities. As such, correlated equilibria constitute a much more plausible outcome under independent agents with \emph{bounded rationality}. Indeed, it is folklore that when all players in a general game employ a \emph{no-regret learning algorithm}~\citep{Hart00:Simple}, the average correlated distribution of play converges to a \emph{coarse correlated equilibrium (CCE)}---a further relaxation of CE~\citep{Moulin78:Strategically}. 

Now as it happens, there are specific classes of games, such as \emph{strictly competitive games}~\citep{Adler09:A} and \emph{constant-sum polymatrix games}~\citep{Daskalakis09:On,Cai11:On,Cai16:Zero}, for which CCE ``collapse'' to NE. In fact, although computing Nash equilibria in such games is amenable to linear programming~\citep{Adler13:The}, the state of the art algorithms are based on uncoupled learning procedures, for reasons mostly relating to scalability. Nevertheless, such algorithms require $\Omega(\poly(1/\epsilon))$ iterations to reach an $\epsilon$-approximate Nash equilibrium~\citep{Daskalakis11:Near}, meaning that the convergence is slow, particularly in the high precision regime---both in theory and in practice. Furthermore, prior literature treats convergence to NE in, for example, zero-sum games analogously to convergence to CCE in general-sum games. While this unifying treatment---which is an artifact of the no-regret framework---may seem compelling at first glance, it is unclear whether CCE share similar convergence properties to NE. Indeed, as we have alluded to, those equilibrium concepts are fundamentally different (in general games). 

Our primary contribution is to challenge this traditional framework, establishing that \emph{convergence to CCE can be remarkably faster than the convergence to NE}; this fundamental difference is showcased in \Cref{fig:NE-CCE}. In fact, we show that the only obstacle for converging \emph{exactly} after only $O(1)$ iterations to a CCE in general-sum two-player games is reaching arbitrarily close to a NE. Our results also reveal an intriguing complementarity: the farther the dynamics are from NE, the faster the guarantee of convergence to CCE. 

\begin{figure}
    \centering
    \includegraphics[scale=.8]{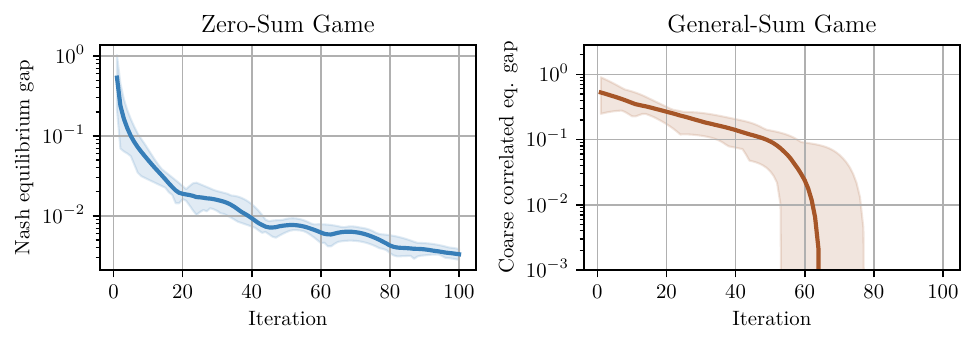}
    \caption{Convergence of uncoupled no-regret dynamics to NE in a zero-sum game (left image) versus convergence to CCE in a general-sum game (right image); further details are provided in \Cref{subsection:example}.}
    \label{fig:NE-CCE}
\end{figure}

\subsection{Our Contributions}

We study uncoupled no-regret learning dynamics in the fundamental class of two-player general-sum games (a.k.a. \emph{bimatrix games}). Specifically, we focus on \emph{optimistic mirror descent (OMD)}~\citep{Chiang12:Online,Rakhlin13:Optimization}, a variant of the standard miror descent method which incorporates a \emph{prediction} into the optimization step. Before we state our main result, let us introduce \emph{strong CCE}, a \emph{refinement} of CCE for which any unilateral deviation from a player is not only worse, but \emph{decreases} its utility by an additive $\epsilon \ge 0$ (see \Cref{def:strong-CCE} for a formal description). In particular, we clarify that any strong CCE is (trivially) an \emph{exact} CCE.\footnote{We include a numerical example illuminating aspects of strong CCE in \Cref{subsection:example}.} We are now ready to state our main theorem.\footnote{For simplicity in the exposition of our results, we use the $O(\cdot)$ and $\Omega(\cdot)$ notation in the main body to suppress parameters that depend on the underlying game; precise statements are given in \Cref{appendix:proofs}.}

\begin{theorem}[Abridged; Full Version in \Cref{corollary:NFGs}]
    \label{theorem:main-abridged}
Fix any sufficiently small $\epsilon > 0$, and suppose that both players in a bimatrix game employ OMD with learning rate $\eta = O(\epsilon^2)$ and smooth regularizer for $T = \Omega(\poly(1/\epsilon))$ repetitions. Then, 
\begin{itemize}
    \item Either the dynamics reach an $\epsilon$-approximate Nash equilibrium;
    \item Or, otherwise, the average correlated distribution of play is an $\Omega(\poly(\epsilon))$-strong CCE.
\end{itemize}
\end{theorem}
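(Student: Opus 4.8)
The plan is to recast the dichotomy as a statement about the \emph{sign and magnitude} of each player's cumulative regret, and then read it off from the optimistic-mirror-descent analysis. Write $u_1^t = \mat{A}\vec{x}_2^t$ and $u_2^t = \mat{B}^\top\vec{x}_1^t$ for the observed utilities, where $\vec{x}_i^t$ are the iterates. The average correlated distribution of play is $\frac1T\sum_t \vec{x}_1^t\otimes\vec{x}_2^t$, and a direct calculation shows it is a $\delta$-strong CCE exactly when $\mathrm{Reg}_i^T \le -\delta T$ for \emph{both} $i\in\{1,2\}$; so it suffices to prove that, whenever no iterate is an $\epsilon$-approximate Nash equilibrium, each regret is not merely negative but satisfies $\mathrm{Reg}_i^T = -\Omega(\poly(\epsilon))\,T$. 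I would start from the RVU bound for OMD with a smooth (strongly convex, Lipschitz-gradient) regularizer, kept in its sharp form
\[
\mathrm{Reg}_i^T \;\le\; \frac{\Delta}{\eta} \;+\; \alpha\,\eta\sum_{t}\norm{u_i^t - u_i^{t-1}}_*^2 \;-\; \frac{\beta}{\eta}\,M_i, \qquad M_i\defeq\sum_t\norm{\vec{x}_i^t - \vec{x}_i^{t-1}}^2,
\]
retaining the negative path-length term usually discarded. The bimatrix structure is what makes this usable: $\norm{u_1^t-u_1^{t-1}}_*\le\norm{\mat{A}}_{\op}\norm{\vec{x}_2^t-\vec{x}_2^{t-1}}$ and symmetrically, so each player's utility variation is precisely the \emph{opponent's} movement. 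Taking $\eta = O(\epsilon^2)$ small enough that the cross term is dominated and summing the two bounds gives $\mathrm{Reg}_1^T+\mathrm{Reg}_2^T \le \frac{2\Delta}{\eta} - \frac{\beta}{2\eta}(M_1+M_2)$.

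Next I would prove a two-sided \emph{gap--movement} lemma, which is exactly where smoothness of the regularizer is essential. Let $\mathrm{gap}_i^t \defeq \max_{\vec{x}}\dotprod{\vec{x}-\vec{x}_i^t,\, u_i^t}$ be the instantaneous best-response gap, so the Nash gap at iterate $t$ is $\mathrm{gap}_1^t+\mathrm{gap}_2^t$. From first-order optimality of the OMD update and the Lipschitz gradient of the regularizer I obtain the upper bound $\mathrm{gap}_i^t \le \frac{C}{\eta}\norm{\vec{x}_i^{t+1}-\vec{x}_i^t} + C'\norm{\vec{x}_j^{t+1}-\vec{x}_j^t}$, and (again using boundedness of the Hessian) a matching lower bound $\norm{\vec{x}_i^{t+1}-\vec{x}_i^t}\gtrsim \eta\,\mathrm{gap}_i^t$. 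The upper bound yields the \emph{aggregate} branch of the dichotomy immediately: if no iterate is an $\epsilon$-NE then $\mathrm{gap}_1^t+\mathrm{gap}_2^t>\epsilon$ for every $t$, hence $\epsilon^2 T \le \frac{C''}{\eta^2}(M_1+M_2)$, i.e.\ $M_1+M_2 = \Omega(\eta^2\epsilon^2 T)$. Substituting into the summed regret bound gives $\mathrm{Reg}_1^T+\mathrm{Reg}_2^T \le \frac{2\Delta}{\eta} - \Omega(\eta\epsilon^2 T)$, which is $-\Omega(\poly(\epsilon))\,T$ once $T=\Omega(\poly(1/\epsilon))$ swamps the $\Delta/\eta = O(1/\epsilon^2)$ initialization term.

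The main obstacle is upgrading this aggregate negativity to a \emph{per-player} statement, since the summed bound is consistent with one player carrying all of the movement while the other stands still with non-negative regret. Concretely, I must rule out the asymmetric regime in which, say, $M_2$ is small while $M_1$ grows linearly and the Nash gap stays above $\epsilon$. The resolution I would pursue is a stability argument showing that a \emph{near-stationary} player forces its opponent to converge: the lower half of the gap--movement lemma turns ``$M_2$ small'' into ``$\mathrm{gap}_2^t$ small on average,'' so player~2 is persistently near-best-responding; and a small $M_2$ means player~1 faces an almost time-invariant utility $\mat{A}\vec{x}_2^t$, against which the OMD path length is summable, forcing $\mathrm{gap}_1^t$ small as well and thus producing an $\epsilon$-NE iterate---a contradiction. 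The delicate point, and the real crux, is that ``$M_2$ small'' (small squared consecutive steps) does \emph{not} by itself forbid a long, slowly drifting trajectory for player~2; controlling this drift is precisely where one must invoke the finer contraction/stability properties of OMD rather than the black-box RVU bound.

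Once both $M_1,M_2=\Omega(\eta^2\poly(\epsilon)T)$ are secured, I would close the argument by returning to the \emph{per-player} RVU bound: there the opponent's contribution is $\alpha\eta\norm{\mat{A}}_{\op}^2 M_2 = O(\eta^3 T)$ (each step is an $\eta$-scaled gradient step, so $M_j=O(\eta^2T)$), which is of lower order than the player's own term $\frac{\beta}{\eta}M_i = \Omega(\eta\poly(\epsilon)T)$. Hence $\mathrm{Reg}_i^T = -\Omega(\poly(\epsilon))\,T$ for each $i$ individually, certifying an $\Omega(\poly(\epsilon))$-strong CCE. Specializing, if the iterates stay bounded away from Nash by a fixed gap, the negative per-step term $\Omega(\eta)$ overtakes the $O(1/\eta)$ initialization after a constant number of rounds, giving the $O(1)$-iteration \emph{exact} CCE corollary.
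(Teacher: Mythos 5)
Your proposal reproduces the paper's aggregate argument faithfully: the RVU bound specialized to bimatrix games, the observation that each player's utility variation is exactly the opponent's movement, and the implication ``no $\epsilon$-NE iterate $\Rightarrow$ $M_1+M_2=\Omega(\eta^2\epsilon^2T)$'' (this is the paper's Proposition on approximate fixed points of OMD, plus summation over $t$) all match the paper. But the step you yourself flag as ``the real crux'' --- upgrading aggregate negativity to \emph{per-player} negativity, i.e., ruling out the regime where, say, $M_1=\Omega(\eta^2\epsilon^2T)$ while $M_2$ is tiny --- is precisely where your argument has a genuine gap. Your proposed resolution (a near-stationary player $2$ forces player $1$ to converge, contradiction) is not carried out, and as you concede it cannot be carried out as stated: smallness of $\sum_t\|\vec{y}^{(t)}-\vec{y}^{(t-1)}\|^2$ permits a slow but unbounded drift of player $2$, so player $1$ is tracking a moving target, and you have no lemma forcing $\mathrm{gap}_1^t$ to ever become small. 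You name the obstacle but do not supply the idea that overcomes it.

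The paper's resolution is a different mechanism, and it is the one piece of machinery your outline is missing (its Lemma on balanced path lengths). It is a potential/telescoping argument: by $1$-strong convexity of the regularizer, each OMD step of player $\cX$ improves her objective against the current $\vec{y}$ by at least $\frac{1}{2\eta}$ times the squared step; summing, these improvements telescope into a quantity bounded by the (bounded) utilities, except for the terms $\langle\vec{x}^{(t)},\mat{A}(\vec{y}^{(t-1)}-\vec{y}^{(t)})\rangle$, each of which costs at most $\normset{\cX}\,\|\mat{A}\|_{\op}\,\|\vec{y}^{(t)}-\vec{y}^{(t-1)}\|$. Hence
\[
\sum_{t=1}^T\|\vec{y}^{(t)}-\vec{y}^{(t-1)}\| \;\geq\; \frac{1}{2\eta\normset{\cX}\|\mat{A}\|_{\op}}\,\Sigma_{\cX}^T \;-\; \frac{2}{\|\mat{A}\|_{\op}},
\]
a lower bound on the opponent's \emph{first-order} path length, which Cauchy--Schwarz ($\sum_t\|\cdot\|^2\ge\frac{1}{T}(\sum_t\|\cdot\|)^2$) converts into $M_2=\Omega(\epsilon^4\eta^2T)$. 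So the asymmetric regime you worry about is simply impossible --- at the cost of degrading $\epsilon^2$ to $\epsilon^4$, which is exactly why the paper's strong-CCE parameter is $\Omega(\epsilon^4\eta)$ per round. With that in hand, your own closing step (opponent contribution $O(\eta^3T)$ by the $3\eta$-stability bound versus own negative term $\Omega(\epsilon^4\eta T)$, dominated once $\eta=O(\epsilon^2)$) finishes exactly as in the paper. Without this lemma, or a worked-out substitute for your drift-control argument, your proof only establishes negativity of the \emph{sum} of the regrets, which certifies nothing about strong CCE: that requires $\max\{\reg_{\cX}^T,\reg_{\cY}^T\}\le-\Omega(\poly(\epsilon))\,T$, i.e., both players' regrets to decay linearly simultaneously.
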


Here, the convergence to Nash equilibrium is implied in a \emph{last-iterate} sense (and not a typical time-average): \emph{i.e.}, there exists a time $t \in \range{T}$ for which the dynamics reach an $\epsilon$-approximate Nash equilibrium. Of course, in light of well-established impossibility results, OMD is certainly not going to reach an $\epsilon$-approximate NE---for a sufficiently small \emph{constant} $\epsilon > 0$~\citep{Rubinstein16:Settling}---in \emph{every bimatrix game}. As a result, an immediate interesting implication of \Cref{theorem:main-abridged} is that, setting $\epsilon$ to be sufficiently small, OMD either yields the best known polynomial-time approximation for Nash equilibria in bimatrix games, or, otherwise, $O(1)$ iterations suffice to obtain a strong CCE. It is intriguing that the farther the dynamics are from yielding a Nash equilibrium, the \emph{stronger} the CCE predicted by \Cref{theorem:main-abridged}. Indeed, the only impediment for reaching a strong CCE after only $O(1)$ iterations lies in being very close to a Nash equilibrium.\footnote{Closeness to NE is always implied in terms of the best response gap, not the distance to the set of NE.}

To establish \Cref{theorem:main-abridged}, we prove that when the iterates of OMD are bounded away from being Nash equilibria, the regret of \emph{both} players is not only negative, but \emph{decreases linearly} over time (\Cref{theorem:neg-reg}); see \Cref{fig:regret-trajectories} for an illustration. In this way, our approach represents a substantial departure from prior work which endeavored to characterize no-regret learning dynamics from a unifying standpoint for all possible games. Furthermore, besides the folklore connection of regret with CCE (which allows us to establish \Cref{theorem:main-abridged}), regret is the canonical measure of performance in online learning. In light of this, our results can be construed as follows: when the dynamics do not reach a Nash equilibrium, we obtain remarkably stronger performance guarantees. This seems to suggest that \emph{cycling behavior in games may not be a bug, but a feature}.\footnote{To explain this point further, it is important to connect \emph{last-iterate convergence} with NE. Namely, if all players employ no-regret learning algorithms and the dynamics converge pointwise, then it is easy to see that the limit point has to be a NE. On the other hand, correlated equilibrium concepts are, by definition, incompatible with last-iterate convergence---at least under uncoupled dynamics. Indeed, correlation inherently \emph{requires} cycling behavior. In light of this, there are games for which \emph{stability can be at odds with efficiency} since NE can be dramatically more inefficient that correlation schemes~\citep{Moulin78:Strategically}. We also refer to our example in \Cref{section:experiments} where this phenomenon becomes clear; \emph{c.f.}, see~\citep{Kleinberg11:Beyond}.}

From a technical standpoint, a central ingredient in our proof is a remarkable property---discovered in prior works---coined as the $\rvu$ bound~\citep{Syrgkanis15:Fast,Rakhlin13:Optimization}. Specifically, our argument crucially leverages the rather enigmatic negative term in the $\rvu$ bound (\Cref{def:rvu}), which was treated as a ``cancellation factor'' in prior works. This allows us to establish (conditionally) linear decay for the regrets in \Cref{theorem:neg-reg}. To this end, an important component of our proof is to connect the path lengths of the two players in \Cref{lemma:balanced} through a potential-type argument; notably, such properties break in arbitrary games with more than two players (see \Cref{remark:multi}). (We give a detailed sketch of our proof of \Cref{theorem:main-abridged} in \Cref{section:main}.)

Importantly, our techniques apply under arbitrary convex and compact strategy spaces, thereby allowing for a direct extension to, for example, \emph{normal-form coarse correlated equilibria} in \emph{extensive-form games (EFGs)}~\citep{Moulin78:Strategically}. Another compelling aspect of our result is that both players employ a \emph{constant} learning rate, a feature which has been extensively motivated in prior works (\emph{e.g.}, see \citep{Bailey19:Fast,Piliouras21:Optimal,Golowich20:Tight}). Besides the improved performance guarantees observed in practice under a time-invariant learning rate, it arguably induces a more natural behavior from an economic standpoint. Finally, we corroborate our theoretical findings through experiments conducted on standard benchmark extensive-form games in \Cref{subsection:benchmark}.

\subsection{Further Related Work}

The related work on the subject is too vast to cover in its entirety. Below we only highlight several key contributions, while we encourage the interested reader to investigate further references therein.

\paragraph{Nash Equilibria in Bimatrix Games} Bimatrix games is one of the most fundamental and well-studied classes of games. In a celebrated series of works, it was shown that computing a NE in bimatrix games does not admit a fully polynomial-time approximation scheme (FPTAS), unless every problem in $\ppad$~\citep{Papadimitriou94:On} can be solved in polynomial time~\citep{Chen09:Settling,Daskalakis09:Complexity}. In light of this intractability result, there has been a substantial amount of attention in deriving polynomial-time approximation algorithms for NE in bimatrix games~\citep{Daskalakis06:A,Daskalakis07:Progress,Bosse10:New,Tsaknakis08:An,Kontogiannis10:Well,Daskalakis09:Oblivious,Deligkas22:A,Deligkas22:AWell}. In fact, even computing a sufficiently small \emph{constant} approximation essentially requires quasi-polynomial time~\citep{Rubinstein16:Settling,Kothari18:Sum}, subject to the exponential-time hypothesis for $\ppad$~\citep{Babichenko16:Can}, matching the seminal QPTAS of \citet{Lipton03:Playing}. It is also worth stressing that, in terms of computing Nash equilibria, there is a polynomial-time reduction from any game with a \emph{succinct representation}\footnote{In fact, the reduction in \citep{Daskalakis06:The} also applies to certain \emph{exponential-type} games such as network congestion games and extensive-form games.} to a bimatrix game~\citep{Daskalakis06:The}; this illustrates the generality of bimatrix games.\footnote{An interesting question is whether the reduction in~\citep{Daskalakis06:The} also preserves CCE. In that case, we could essentially lift---in terms of the computational complexity---some of the results established in our paper. Nevertheless, this would not have any immediate practical implications in light of the notoriously complicated reduction developed in~\citep{Daskalakis06:The}.}

\paragraph{Near-Optimal Regret in Games} In a pioneering work by \citet{Daskalakis11:Near}, it was shown that there exist uncoupled no-regret learning dynamics such that when employed by \emph{both} players in a \emph{zero-sum} game, the cumulative regret incurred by each player is bounded by $\widetilde{O}(1)$,\footnote{We use the $\widetilde{O}(\cdot)$ notation to suppress $\polylog(T)$ factors.} substantially improving over the $\Omega(\sqrt{T})$ lower bound under \emph{adversarial utilities}. At the same time, the algorithm proposed in \citep{Daskalakis11:Near} enjoys optimal regret in the adversarial regime as well. Thereafter, there has been a considerable amount of effort in simplifying and extending the previous results along several lines~\citep{Rakhlin13:Optimization,Syrgkanis15:Fast,Chen20:Hedging,Foster16:Learning,Wei18:More,Farina19:Stable,Daskalakis21:Fast,Daskalakis21:Near,Piliouras21:Optimal}. This line of work was culminated in a recent result due to \citet{Daskalakis21:Near}, establishing $\widetilde{O}(1)$ individual regret when all players employ an \emph{optimistic} variant of \emph{multiplicative weights update} in general games, nearly-matching the lower bound in~\citep{Daskalakis11:Near}. Nevertheless, the optimality here is only \emph{existential}: there \emph{exist} games---in fact, zero-sum---for which the guarantee in~\citep{Daskalakis21:Near} is essentially unimprovable. But we argue that this is not a good enough justification for considering the problem of no-regret learning in games settled. As we show in this paper, substantially more refined guarantees are possible beyond zero-sum games.

\paragraph{Last-Iterate Convergence} A folklore phenomenon in the theory of learning in games is that broad families of no-regret algorithms exhibit \emph{cyclic} or even \emph{chaotic behavior}~\citep{Yuzuru02:Chaos,Sandholm10:Population,Mertikopoulos18:Cycles,Andrade21:Learning,Cheung20:Chaos,Cheung21:Chaos}, even in low-dimensional zero-sum games. A compelling approach to ameliorate this problem was proposed by~\citet{Daskalakis18:Training}, showing that an \emph{optimistic} variant of gradient descent guarantees last-iterate convergence in unconstrained bilinear (zero-sum) games. Thereafter, there has been a tremendous amount of interest in strengthening and extending their result in different regimes; for a highly incomplete list, we refer to~\citep{Daskalakis19:Last,Daskalakis18:The,Wei21:Linear,Golowich20:Tight,Golowich20:Last,Hsieh21:Adaptive,Zhou17:Mirror,LinZMJ20,Mertikopoulos19:Optimistic,Liang19:Interaction,Zhang20:Convergence,Mokhtari20:A,Daskalakis20:Independent,Wei21:Last,Anagnostides21:Frequency}, and references therein. The precursor of our main result is the recent paper of~\citet{Anagnostides22:On} that obtained similar results but for the \emph{sum} of the players' regrets, which does not have implications in terms of convergence to CCE. We point out that the last-iterate convergence of instances of OMD in zero-sum games is consistent with---in fact, implied by---\Cref{theorem:main-abridged}. Indeed, in zero-sum games the sum of the players' regrets has to be nonnegative since it is tantamount to the \emph{duality gap}. Thus, \Cref{theorem:neg-reg} implies that OMD has to reach arbitrarily close to a Nash equilibrium in zero-sum games; this argument also extends to \emph{strategically zero-sum games}~\citep{Moulin78:Strategically}.
\section{Preliminaries}
\label{section:prel}

In this section we introduce basic notions related to online optimization and equilibria in games. For a comprehensive treatment on online optimization we refer to~\citep{Shalev-Shwartz12:Online}, while for an introduction to the theory of learning in games we refer to the excellent book of \citet{Cesa-Bianchi06:Prediction}. This section can be skimmed for notation if the reader is already familiar with those topics.

\paragraph{Notation and Conventions} We let $\N \defeq \{1, 2, \dots \}$ be the set of natural numbers. The (discrete) time index is represented exclusively via the variable $t$. We also use the shorthand notation $\range{p} \defeq \{1, 2, \dots, p\}$. We let $\|\cdot\|$ be a norm on $\R^d$, for some $d \in \N$, and $\|\cdot\|_*$ be the \emph{dual norm} of $\|\cdot\|$; namely, $\| \Vec{v} \|_* \defeq \sup_{\|\vec{u}\| \leq 1} \langle \vec{u}, \vec{v} \rangle$. For a (nonempty) convex and compact set $\cX \subseteq \R^d$, we denote by $\diamset{\cX} \defeq \max_{\vec{x}, \hatx \in \cX} \| \vec{x} - \hatx \|$ the \emph{diameter} of $\cX$ with respect to $\|\cdot\|$; to lighten the notation, the underlying norm will be implicit in our notation. Moreover, we let $\normset{\cX} \defeq \max_{\vec{x} \in \cX} \|\vec{x}\|$. For a matrix $\mat{A}$ we let $\|\mat{A}\|_{\op}$ be its \emph{operator norm}: $\|\mat{A}\|_{\op} \defeq \sup_{\|\vec{u}\| \leq 1} \|\mat{A} \vec{u}\|_*$. We point out that $\|\mat{A}\|_{\op} \neq 0$ when $\mat{A} \neq 0$. Finally, for a finite nonempty set $\cA$, we let $\Delta(\cA) \defeq \left\{ \vec{x} \in \R_{\ge 0}^{|\cA|} : \sum_{a \in \cA} \vec{x}(a) = 1 \right\}$ be the \emph{probability simplex} over $\cA$, where $\vec{x}(a)$ is the coordinate of $\vec{x}$ corresponding to $a \in \cA$.

\subsection{Online Learning and Regret}

Let $\cX$ be a nonempty, convex and compact subset of $\R^d$, for some $d \in \N$. In the \emph{online learning framework} the \emph{learner} commits to a \emph{strategy} $\Vec{x}^{(t)} \in \cX$ at every iteration $t \in \N$. Then, the learner receives as feedback from the environment a (linear) \emph{utility function} $u^{(t)} : \cX \ni \Vec{x} \mapsto \langle \Vec{x}, \Vec{u}^{(t)} \rangle$, with $\vec{u}^{(t)} \in \R^d$, so that the utility received at time $t$ is given by $\langle \Vec{x}^{(t)}, \Vec{u}^{(t)} \rangle$. Based on that feedback, the learner may choose to adapt its next strategy accordingly. The framework is \emph{online} in the sense that no information about future utilities is available. The canonical objective in this framework is to minimize the \emph{cumulative external regret} (or simply \emph{regret}), defined as follows.
\begin{equation}
    \label{eq:reg}
    \reg^T \defeq \max_{\Vec{x}^* \in \cX} \left\{ \sum_{t=1}^T \langle \Vec{x}^*, \Vec{u}^{(t)} \rangle \right\} - \sum_{t=1}^T \langle \Vec{x}^{(t)}, \Vec{u}^{(t)} \rangle,
\end{equation}
where $T \in \N$ is the \emph{time horizon}. That is, the performance is measured in terms of the optimal \emph{fixed strategy in hindsight}. It is well-known that broad families of online learning algorithms, such as \emph{mirror descent (MD)}~\citep{Nemirovski83:Problem}, ensure $O(\sqrt{T})$ regret under \emph{any sequence of (bounded) utilities}, and this guarantee is asymptotically optimal in the \emph{adversarial regime}~\citep{Cesa-Bianchi06:Prediction}. 

\subsection{Optimistic Regret Minimization} 

On the other hand, much better guarantees are attainable beyond the worst-case setting if the observed utilities are more \emph{predictable}. For example, this could occur if the utilities have small variation. To leverage the additional structure, several \emph{predictive} algorithms have been introduced in the last few years~\citep{Hazan11:Better,Chiang12:Online,Chiang13:Beating,Rakhlin13:Online,Rakhlin13:Optimization,Syrgkanis15:Fast,Farina21:Faster}. In this paper we employ a predictive variant of MD, known as \emph{optimistic mirror descent}~\citep{Chiang12:Online,Rakhlin13:Optimization}.

\paragraph{Optimistic Mirror Descent} 

Let $\cR$ be a $1$-strongly convex regularizer (or \emph{distance generating function (DGF)}) with respect to a norm $\|\cdot\|$ on $\R^d$, continuously differentiable on $\cX$.\footnote{In general, differentiability of $\cR$ is only imposed for the relative interior of $\cX$, but for our purposes we will require this stronger condition. Regularizers whose gradients blow up in the boundary, such as \emph{negative entropy}~\citep{Shalev-Shwartz12:Online}, will not satisfy the smoothness condition we impose in the sequel.} We say that $\cR$ is $G$\emph{-smooth}, with $G > 0$, if for any $\Vec{x}, \hatx \in \cX$,
\begin{equation}
    \label{eq:smoothness}
    \| \nabla \cR(\Vec{x}) - \nabla \cR(\hatx) \|_* \leq G \| \Vec{x} - \hatx \|.
\end{equation}
For example, the \emph{Euclidean regularizer} $\cR(\vec{x}) \defeq \frac{1}{2} \| \vec{x} \|_2^2$, which is $1$-strongly convex with respect to the Euclidean norm $\|\cdot\|_2$, trivially satisfies the smoothness condition of \eqref{eq:smoothness} with $G = 1$; we recall that $\|\cdot\|_2$ is self-dual. Moreover, we let $D_{\cR} \infdivx{\vec{x}}{\hatx} \defeq \cR(\Vec{x}) - \cR(\hatx) - \langle \nabla \cR(\hatx), \Vec{x} - \hatx \rangle$ be the \emph{Bregman divergence} induced by $\cR$~\citep{Rockafellar70:Convex}. \emph{Optimistic mirror descent (OMD)}\footnote{To avoid any confusion, we point out that OMD sometimes stands for \emph{online} mirror descent in the literature.} is parameterized by a (dynamic) \emph{prediction vector} $\Vec{m}^{(t)} \in \R^d$, for every time $t \in \N$, and a \emph{learning rate} $\eta > 0$, so that its update rule takes the following form for $t \in \N$:
\begin{equation}
    \label{eq:OMD}
    \tag{OMD}
    \begin{split}
    \Vec{x}^{(t)} &\defeq \argmax_{\Vec{x} \in \cX} \left\{ \langle \Vec{x}, \Vec{m}^{(t)} \rangle - \frac{1}{\eta} D_{\cR} \infdivx{\Vec{x}}{\hatx^{(t-1)}} \right\}; \\
    \hatx^{(t)} &\defeq \argmax_{\hatx \in \cX} \left\{ \langle \hatx, \Vec{u}^{(t)} \rangle - \frac{1}{\eta} D_{\cR} \infdivx{\hatx}{\hatx^{(t-1)}} \right\}.
    \end{split}
\end{equation}
Further, OMD is initialized as $\hatx^{(0)} \defeq \argmin_{\hatx \in \cX} \cR(\hatx)$; for convenience, we also let $\vec{x}^{(0)} \defeq \hatx^{(0)}$. We will refer to $\hatx^{(0)}, \hatx^{(1)}, \dots$ as the \emph{secondary} sequence of OMD, while $\vec{x}^{(0)}, \vec{x}^{(1)}, \dots$ is the \emph{primary} sequence. We also let $\diamreg{\cR} \defeq \max_{\vec{x} \in \cX} D_{\cR} \infdivx{\vec{x}}{\hatx^{(0)}}$ denote the $\cR$-diameter of $\cX$. As usual, we consider the ``one-recency bias'' prediction mechanism~\citep{Syrgkanis15:Fast}, wherein $\vec{m}^{(t)} \defeq \vec{u}^{(t-1)}$ for $t \in \N$. (To make $\vec{u}^{(0)}$ well-defined in games, each player receives at the beginning the utility corresponding to the players' strategies at $t = 0$; this is only made for convenience in the analysis.) Our results extend beyond this simple prediction mechanism without qualitatively altering our results.

An important special case of \eqref{eq:OMD} arises under the Euclidean regularizer $\cR(\vec{x}) = \frac{1}{2} \|\vec{x}\|_2^2$, in which case we refer to the dynamics as \emph{optimistic gradient descent (OGD)}:
\begin{equation}
    \label{eq:OGD}
    \tag{OGD}
    \begin{split}
    \vec{x}^{(t)} &\defeq \proj_{\cX} \left( \hatx^{(t-1)} + \eta \vec{m}^{(t)} \right); \\
    \hatx^{(t)} &\defeq \proj_{\cX} \left( \hatx^{(t-1)} + \eta \vec{u}^{(t)} \right).    
    \end{split}
\end{equation}
Here, $\proj_{\cX}(\cdot)$ stands for the Euclidean projection to the set $\cX$. We remark that the projection can be computed \emph{exactly} in nearly-linear time for constraint sets such as the simplex (\emph{e.g.}, see \citep{Duchi08:Efficient}).

A fundamental property crystallized in~\citep{Syrgkanis15:Fast}, building on the earlier work of~\citet{Rakhlin13:Optimization}, is the \emph{regret bounded by variation in utilities} ($\rvu$):\footnote{This nomenclature is perhaps unfortunate as it does not capture the crucial last term in \eqref{eq:rvu}.}
\begin{definition}[$\rvu$ Property]
    \label{def:rvu}
A regret minimization algorithm satisfies the $\rvu$ property if there exist time-invariant parameters $\alpha, \beta, \gamma > 0$, so that its regret $\reg^T$ up to $T \in \N$ can be bounded as 
\begin{equation}
    \label{eq:rvu}
    \reg^T \leq \alpha + \beta \sum_{t=1}^T \| \vec{u}^{(t)} - \vec{u}^{(t-1)}\|_*^2 - \gamma \sum_{t=1}^T \| \vec{x}^{(t)} -  \vec{x}^{(t-1)} \|^2,
\end{equation}
under any sequence of utility vectors $\vec{u}^{(1)}, \dots, \vec{u}^{(T)}$. Here, $\vec{x}^{(1)}, \dots, \vec{x}^{(T)}$ is the sequence of iterates produced by the regret minimizer, and $(\|\cdot\|, \|\cdot\|_*)$ is a pair of dual norms.
\end{definition}

The first immediate consequence of the $\rvu$ bound is that, as long as the observed utilities exhibit small variation over time, the intermediate term in \eqref{eq:rvu} will also grow slowly, leading to improved regret bounds. For example, when \emph{all} players in a general normal-form game employ regularized learning algorithm satisfying the $\rvu$ bound, then the \emph{individual regret} of each player grows as $O(T^{1/4})$~\citep{Syrgkanis15:Fast}, a substantial improvement over the $\Omega(\sqrt{T})$ lower bound under adversarial utilities. And, as it turns out, online learning algorithms such as OMD satisfy \Cref{def:rvu} with parameters that depend only on the learning rate and the geometry of the regularizer (\Cref{prop:refined_rvu}). But the even more remarkable feature of the $\rvu$ bound---which is crucially leveraged in our work---is the last term in \eqref{eq:rvu}, which seems to suggest that the regret could decay substantially when the \emph{iterates change rapidly over time}. For our purposes, we will employ a slight refinement of the $\rvu$ bound for OMD, which follows from~\citep[Lemma 1 in the Full Version]{Rakhlin13:Optimization}---this can be extracted en route to the proof of~\citep[Theorem 18]{Syrgkanis15:Fast}.
\begin{proposition}[\citep{Rakhlin13:Optimization,Syrgkanis15:Fast}]
    \label{prop:refined_rvu}
The regret of \eqref{eq:OMD} with learning rate $\eta > 0$ can be bounded as 
\begin{equation*}
    \reg^T \leq \frac{\diamreg{\cR}}{\eta} + \eta \sum_{t=1}^T \| \vec{u}^{(t)} - \vec{u}^{(t-1)} \|^2_* - \frac{1}{4\eta} \sum_{t=1}^T \left(  \| \vec{x}^{(t)} - \hatx^{(t)} \|^2 + \| \vec{x}^{(t)} - \hatx^{(t-1)} \|^2 \right).
\end{equation*}
\end{proposition}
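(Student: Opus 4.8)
The plan is to fix a comparator $\vec{x}^* \in \cX$ attaining the maximum in \eqref{eq:reg}, so that $\reg^T = \sum_{t=1}^T \langle \vec{u}^{(t)}, \vec{x}^* - \vec{x}^{(t)} \rangle$, and to bound each per-round term $\langle \vec{u}^{(t)}, \vec{x}^* - \vec{x}^{(t)} \rangle$ via the first-order optimality conditions of the two proximal steps in \eqref{eq:OMD}. Each update maximizes a concave objective over the convex set $\cX$, so the variational inequality characterizing $\hatx^{(t)}$ (tested against $\vec{x}^*$) and the one characterizing $\vec{x}^{(t)}$ (tested against $\hatx^{(t)}$) read, respectively,
\begin{align*}
\langle \vec{u}^{(t)}, \vec{x}^* - \hatx^{(t)} \rangle &\leq \tfrac{1}{\eta} \langle \nabla \cR(\hatx^{(t)}) - \nabla \cR(\hatx^{(t-1)}), \vec{x}^* - \hatx^{(t)} \rangle, \\
\langle \vec{m}^{(t)}, \hatx^{(t)} - \vec{x}^{(t)} \rangle &\leq \tfrac{1}{\eta} \langle \nabla \cR(\vec{x}^{(t)}) - \nabla \cR(\hatx^{(t-1)}), \hatx^{(t)} - \vec{x}^{(t)} \rangle.
\end{align*}

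Next I would write $\langle \vec{u}^{(t)}, \vec{x}^* - \vec{x}^{(t)} \rangle = \langle \vec{u}^{(t)}, \vec{x}^* - \hatx^{(t)} \rangle + \langle \vec{u}^{(t)} - \vec{m}^{(t)}, \hatx^{(t)} - \vec{x}^{(t)} \rangle + \langle \vec{m}^{(t)}, \hatx^{(t)} - \vec{x}^{(t)} \rangle$ and treat the three pieces separately. The first and third are handled by the two optimality conditions above after rewriting the inner products through the three-point identity $\langle \nabla \cR(\vec{b}) - \nabla \cR(\vec{a}), \vec{c} - \vec{b} \rangle = D_\cR\infdivx{\vec{c}}{\vec{a}} - D_\cR\infdivx{\vec{c}}{\vec{b}} - D_\cR\infdivx{\vec{b}}{\vec{a}}$, applied with $(\vec{a},\vec{b},\vec{c}) = (\hatx^{(t-1)}, \hatx^{(t)}, \vec{x}^*)$ for the first and $(\vec{a},\vec{b},\vec{c}) = (\hatx^{(t-1)}, \vec{x}^{(t)}, \hatx^{(t)})$ for the third. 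The middle piece I would control by Cauchy--Schwarz and weighted AM--GM, using $\vec{m}^{(t)} = \vec{u}^{(t-1)}$, to obtain $\langle \vec{u}^{(t)} - \vec{u}^{(t-1)}, \hatx^{(t)} - \vec{x}^{(t)} \rangle \leq \eta \| \vec{u}^{(t)} - \vec{u}^{(t-1)} \|_*^2 + \tfrac{1}{4\eta} \| \hatx^{(t)} - \vec{x}^{(t)} \|^2$.

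The crux of the argument is a cancellation: the two applications of the three-point identity produce $+\tfrac{1}{\eta} D_\cR\infdivx{\hatx^{(t)}}{\hatx^{(t-1)}}$ from the third piece and $-\tfrac{1}{\eta} D_\cR\infdivx{\hatx^{(t)}}{\hatx^{(t-1)}}$ from the first, which exactly annihilate. What survives is a telescoping pair $\tfrac{1}{\eta}\bigl[D_\cR\infdivx{\vec{x}^*}{\hatx^{(t-1)}} - D_\cR\infdivx{\vec{x}^*}{\hatx^{(t)}}\bigr]$ together with the nonpositive terms $-\tfrac{1}{\eta} D_\cR\infdivx{\hatx^{(t)}}{\vec{x}^{(t)}}$ and $-\tfrac{1}{\eta} D_\cR\infdivx{\vec{x}^{(t)}}{\hatx^{(t-1)}}$. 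Invoking $1$-strong convexity, $D_\cR\infdivx{\vec{a}}{\vec{b}} \geq \tfrac{1}{2}\|\vec{a} - \vec{b}\|^2$, converts these into $-\tfrac{1}{2\eta}\|\hatx^{(t)} - \vec{x}^{(t)}\|^2 - \tfrac{1}{2\eta}\|\vec{x}^{(t)} - \hatx^{(t-1)}\|^2$; absorbing the $+\tfrac{1}{4\eta}\|\hatx^{(t)} - \vec{x}^{(t)}\|^2$ from the middle piece leaves $-\tfrac{1}{4\eta}\|\vec{x}^{(t)} - \hatx^{(t)}\|^2$, and weakening $\tfrac{1}{2\eta}$ to $\tfrac{1}{4\eta}$ on the remaining term yields precisely the negative contribution of the claim.

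Finally I would sum over $t \in \range{T}$: the telescoping Bregman terms collapse to $\tfrac{1}{\eta}\bigl[D_\cR\infdivx{\vec{x}^*}{\hatx^{(0)}} - D_\cR\infdivx{\vec{x}^*}{\hatx^{(T)}}\bigr] \leq \tfrac{1}{\eta} D_\cR\infdivx{\vec{x}^*}{\hatx^{(0)}} \leq \diamreg{\cR}/\eta$, using nonnegativity of the Bregman divergence and the definition of $\diamreg{\cR}$, and collecting the variation and negative terms gives the stated inequality. I expect the main obstacle to be purely bookkeeping rather than analytic: one must choose the test points so that the $D_\cR\infdivx{\hatx^{(t)}}{\hatx^{(t-1)}}$ terms cancel, and must track the constants so that the surviving negative terms retain coefficient at least $\tfrac{1}{4\eta}$—everything else reduces to strong convexity, the three-point identity, and AM--GM.
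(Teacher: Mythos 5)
Your proof is correct, and there is no in-paper proof to compare it against: the paper imports this proposition by citation (Lemma 1 of the full version of Rakhlin--Sridharan, extracted en route to Theorem 18 of Syrgkanis et al.), and your argument is a faithful reconstruction of that standard proof---the three-term decomposition of the per-round regret, the two variational inequalities for the prox steps, the three-point identity producing the exact cancellation of $D_{\cR}\infdivx{\hatx^{(t)}}{\hatx^{(t-1)}}$, strong convexity of $\cR$, the weighted AM--GM with weights $\bigl(\eta, \tfrac{1}{4\eta}\bigr)$, and the telescoping sum bounded by $\diamreg{\cR}/\eta$. All constants come out as stated, including the coefficient $\tfrac{1}{4\eta}$ on both negative terms, so your write-up in fact supplies a self-contained proof that the paper leaves to its references.
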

For convenience, we will use a shorthand notation for the \emph{second-order path length}:
\begin{equation}
    \label{eq:not}
    \Sigma_{\cX}^T \defeq  \sum_{t=1}^T \left(  \| \vec{x}^{(t)} - \hatx^{(t)} \|^2 + \| \vec{x}^{(t)} - \hatx^{(t-1)} \|^2 \right).
\end{equation}

\subsection{No-Regret Learning and Coarse Correlated Equilibria}

A folklore connection ensures that when all players in a general game employ a no-regret learning algorithm, the average correlated distribution of play converges to a \emph{coarse correlated equilibrium (CCE)}~\citep{Aumann74:Subjectivity,Moulin78:Strategically}. Formally, let us first introduce the notion of a CCE in general \emph{normal-form games (NFGs)}. To this end, we consider a set of $p$ players $\range{p}$, with each player $i$ having a set of available actions $\cA_i$. The \emph{utility} of player $i \in \range{p}$ is a function $u_i : \bigtimes_{i=1}^p \cA_i \ni \vec{a} \mapsto \R$ indicating the utility received by player $i$ under the action profile $\vec{a} = (a_1, \dots, a_p)$.
\begin{definition}[Approximate Coarse Correlated Equilibrium]
    \label{def:CCE}
    A distribution $\vec{\mu}$ over the set $\bigtimes_{i=1}^p \cA_i$ is an \emph{$\epsilon$-approximate coarse correlated equilibrium}, with $\epsilon \geq 0$, if for any player $i \in \range{p}$ and any unilateral deviation $a_i' \in \cA_i$,
    \begin{equation*}
        \E_{\vec{a} \sim \vec{\mu}}\left[ u_i(a_i', \vec{a}_{-i}) \right] \leq \E_{\vec{a} \sim \vec{\mu}}\left[ u_i(\vec{a}) \right]  + \epsilon.
    \end{equation*}
\end{definition}

A CCE is typically modeled via a trusted mediator who privately recommends actions drawn from a commonly known correlated distribution. Specifically, in a CCE following the mediator's suggestion---\emph{before} actually seeing the recommendation---is a best response (in expectation) for all players. We are now ready to state the fundamental theorem connecting no-regret learning with CCE; for completeness, we include the short proof in \Cref{appendix:proofs}.
\begin{restatable}[Folklore]{theorem}{folklore}
    \label{theorem:folklore}
    Suppose that every player $i \in \range{p}$ employs a no-regret learning algorithm with regret $\reg_i^T$ up to time $T \in \N$. Moreover, let $\vec{\mu}^{(t)} \defeq \vec{x}_1^{(t)} \otimes \dots \otimes \vec{x}_p^{(t)}$ be the correlated distribution of play at time $t \in \range{T}$, and $\Bar{\vec{\mu}} \defeq \frac{1}{T} \sum_{t=1}^T \vec{\mu}^{(t)}$ be the average correlated distribution of play up to time $T$. Then, 
    \begin{equation*}
        \E_{\vec{a} \sim \bar{\vec{\mu}}}\left[ u_i(a_i', \vec{a}_{-i}) \right] \leq \E_{\vec{a} \sim \bar{\vec{\mu}}}\left[ u_i(\vec{a}) \right] + \frac{1}{T} \max_{i \in \range{p}}\reg_i^T.
    \end{equation*}
\end{restatable}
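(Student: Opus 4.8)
The plan is to unwind the definition of (external) regret for each player and recognize that it already encodes precisely the CCE deviation inequality, averaged over time; the entire argument is a change of bookkeeping powered by the multilinearity of the expected utility in the independent marginals.

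First I would fix a player $i \in \range{p}$ and a deviation $a_i' \in \cA_i$. The key observation is that in the game setting the utility vector observed by player $i$ at time $t$ is the marginal $\vec{u}_i^{(t)}(a_i) = \E_{\vec{a}_{-i} \sim \vec{x}_{-i}^{(t)}}[u_i(a_i, \vec{a}_{-i})]$, where the opponents play independently according to $\vec{x}_{-i}^{(t)} = \bigotimes_{j \neq i} \vec{x}_j^{(t)}$. Consequently the instantaneous utility collected at time $t$ equals $\langle \vec{x}_i^{(t)}, \vec{u}_i^{(t)} \rangle = \E_{\vec{a} \sim \vec{\mu}^{(t)}}[u_i(\vec{a})]$, while evaluating the utility vector at the pure strategy $a_i'$ gives $\vec{u}_i^{(t)}(a_i') = \E_{\vec{a} \sim \vec{\mu}^{(t)}}[u_i(a_i', \vec{a}_{-i})]$.

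Next I would invoke the definition of regret \eqref{eq:reg} with strategy space $\cX = \Delta(\cA_i)$, together with the fact that the linear maximization $\max_{\vec{x}^* \in \Delta(\cA_i)} \sum_t \langle \vec{x}^*, \vec{u}_i^{(t)} \rangle$ is attained at a vertex of the simplex, i.e.\ at some pure strategy. This yields $\sum_{t=1}^T \vec{u}_i^{(t)}(a_i') - \sum_{t=1}^T \langle \vec{x}_i^{(t)}, \vec{u}_i^{(t)} \rangle \leq \reg_i^T$ for every deviation $a_i'$. Substituting the two expectation identities above and dividing by $T$, linearity of expectation converts the time averages into expectations under $\bar{\vec{\mu}} = \frac{1}{T} \sum_t \vec{\mu}^{(t)}$, giving exactly the claimed bound with $\reg_i^T / T$ on the right-hand side; finally replacing $\reg_i^T$ by $\max_{i \in \range{p}} \reg_i^T$ yields a uniform bound valid for all players.

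There is no genuine obstacle here---the result is folklore and the derivation is routine. The only points requiring minor care are the identification of the observed utility vector with the appropriate marginal of $u_i$ (which relies on the product structure $\vec{\mu}^{(t)} = \vec{x}_1^{(t)} \otimes \dots \otimes \vec{x}_p^{(t)}$), and the observation that external regret already measures deviations to fixed---hence, by linearity of the simplex maximization, pure---strategies, which is precisely the CCE condition of \Cref{def:CCE}.
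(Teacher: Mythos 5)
Your proposal is correct and follows essentially the same route as the paper's proof: both unwind the definition of regret \eqref{eq:reg} to obtain $\sum_{t=1}^T u_i(a_i', \vec{x}_{-i}^{(t)}) - \sum_{t=1}^T u_i(\vec{x}^{(t)}) \leq \reg_i^T$ (using that the pure deviation $a_i'$ is feasible in the maximization over $\Delta(\cA_i)$), then identify the time-averaged quantities with expectations under $\bar{\vec{\mu}}$ via the product structure of $\vec{\mu}^{(t)}$ and linearity of expectation. The only cosmetic difference is that you justify the pure-deviation step via vertex attainment of the linear maximization, whereas the paper (and the simpler argument) only needs that the vertex $a_i'$ is a feasible point of the simplex.
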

For example, when $\reg_i^T = \widetilde{O}(1)$ for all $i \in \range{p}$~\citep{Daskalakis21:Near}, the average correlated distribution of play converges at a rate of $\widetilde{O}(1/T)$ to a CCE. We also introduce a refinement of CCE which we refer to as \emph{strong CCE}:
\begin{definition}[Strong Coarse Correlated Equilibrium]
    \label{def:strong-CCE}
    A probability distribution $\vec{\mu}$ over the set $\bigtimes_{i=1}^p \cA_i$ is an \emph{$\epsilon$-strong coarse correlated equilibrium}, with $\epsilon \geq 0$, if for any player $i \in \range{p}$ and any unilateral deviation $a_i' \in \cA_i$,
    \begin{equation*}
        \E_{\vec{a} \sim \vec{\mu}}\left[ u_i(a_i', \vec{a}_{-i}) \right] \leq \E_{\vec{a} \sim \vec{\mu}}\left[ u_i(\vec{a}) \right]  - \epsilon.
    \end{equation*}
\end{definition}
In a strong CCE any deviation from the mediator's recommendation is not only worse, but can decrease the player's utility by a significant amount. In that sense, a strong CCE can be a much more self-enforcing equilibrium outcome, and, as such, arguably more likely to occur. A strong CCE (\Cref{def:strong-CCE}) can be thought of as a standard CCE (\Cref{def:CCE}) but with a ``negative approximation''. We include a discussion and an illustration of strong CCE in \Cref{subsection:example}.

\subsection{Bimatrix Games}

A \emph{bimatrix game} involves two players. Each player has a set of strategies $\cX \subseteq \R^n$ and $\cY \subseteq \R^m$, respectively. The (expected) \emph{payoffs} of each player under strategies $(\vec{x}, \vec{y}) \in \cX \times \cY$ is given by the bilinear forms $\vec{x}^\top \mat{A} \vec{y}$ and $\vec{x}^\top \mat{B} \vec{y}$, respectively. Here, $\mat{A}, \mat{B} \in \R^{n \times m}$ are the payoff matrices of the game.\footnote{We assume that $\mat{A} \neq \mat{0}$ and $\mat{B} \neq \mat{0}$. In the contrary case \Cref{theorem:main-abridged} follows trivially.} As an example, the special case where $\cX = \Delta(\cA_{\cX})$ and $\cY = \Delta(\cA_{\cY})$ corresponds to normal-form games, but our current formulation captures \emph{extensive-form games} as well. By convention, we will refer to the two players as player $\cX$ and player $\cY$ respectively. Furthermore, the underlying bimatrix game will be referred to as $(\mat{A}, \mat{B})$, without specifying the strategy sets.

\begin{definition}[Approximate Nash Equilibrium]
    \label{def:NE}
A pair of strategies $(\vec{x}^*, \vec{y}^*) \in \cX \times \cY$ is an \emph{$\epsilon$-approximate Nash equilibrium} of $(\mat{A}, \mat{B})$, for $\epsilon \geq 0$, if for any $(\vec{x}, \vec{y}) \in \cX \times \cY$,
\begin{equation}
    \label{eq:Nash}
    \begin{split}
    \vec{x}^\top \mat{A} \vec{y}^* \leq (\vec{x}^*)^\top \mat{A} \vec{y}^* + \epsilon; \\
    (\vec{x}^*)^\top \mat{B} \vec{y} \leq (\vec{x}^*)^\top \mat{B} \vec{y}^* + \epsilon.
    \end{split}
\end{equation}
\end{definition}
That is, in an $\epsilon$-approximate NE no player has more than an additive $\epsilon \ge 0$ incentive (in expectation) to \emph{unilaterally} deviate from the equilibrium strategy. When $\epsilon = 0$, \eqref{eq:Nash} describes an \emph{exact} NE. While the additive approximation of \Cref{def:NE} we consider here is the typical one encountered in the literature, other notions have also attracted attention, such as relative (or multiplicative) approximations~\citep{Daskalakis13:On}. To make \Cref{def:NE} meaningful, some normalization has to be imposed on the utilities. Here, we will assume that $\max_{\vec{y} \in \cY} \| \mat{A} \vec{y} \|_* \leq 1$ and $\max_{\vec{x} \in \cX} \| \mat{B}^\top \vec{x} \|_* \leq 1$.



\section{Main Result}
\label{section:main}

In this section we sketch the main ingredients required for the proof of \Cref{theorem:main-abridged}; all the proofs are deferred to \Cref{appendix:proofs}. Moreover, a detailed version of \Cref{theorem:main-abridged} for normal-form games under Euclidean regularization is given in \Cref{corollary:NFGs}. In the sequel, we assume that players $\cX$ and $\cY$ employ regularizers $\cR_{\cX}$ and $\cR_{\cY}$, respectively, so that the regret of each player enjoys an $\rvu$ bound with respect to the same pair of dual norms $(\|\cdot\|, \|\cdot\|_*)$; it is immediate to extend the subsequent analysis beyond this case. The first step is to cast the refined $\rvu$ bound of \Cref{prop:refined_rvu} for bimatrix games.

\begin{restatable}{corollary}{rvuour}
    \label{cor:rvu-our}
Suppose that both players employ \eqref{eq:OMD} with learning rate $\eta > 0$. Then, 
\begin{equation*}
    \begin{split}
    \reg_{\cX}^T &\leq \frac{\diamreg{\cR_{\cX}}}{\eta} + \eta \|\mat{A}\|^2_{\op} \sum_{t=1}^T \| \vec{y}^{(t)} - \vec{y}^{(t-1)} \|^2 - \frac{1}{4\eta} \sum_{t=1}^T \left( \| \vec{x}^{(t)} - \hatx^{(t)} \|^2 + \| \vec{x}^{(t)} - \hatx^{(t-1)} \|^2 \right); \\
    \reg_{\cY}^T &\leq \frac{\diamreg{\cR_{\cY}}}{\eta} + \eta \|\mat{B}\|^2_{\op} \sum_{t=1}^T \| \vec{x}^{(t)} - \vec{x}^{(t-1)} \|^2 - \frac{1}{4\eta} \sum_{t=1}^T \left( \| \vec{y}^{(t)} - \haty^{(t)} \|^2 + \| \vec{y}^{(t)} - \haty^{(t-1)} \|^2 \right).
    \end{split}
\end{equation*}
\end{restatable}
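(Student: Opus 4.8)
The plan is to instantiate \Cref{prop:refined_rvu} for each player separately and then rewrite the utility-variation term using the operator norm of the payoff matrices. First I would identify the utility vectors that each player observes while running \eqref{eq:OMD}. Since player $\cX$'s payoff is the bilinear form $\vec{x}^\top \mat{A} \vec{y}$, viewed as a linear function of $\vec{x}$ its utility vector at time $t$ is $\vec{u}_{\cX}^{(t)} = \mat{A}\vec{y}^{(t)}$; symmetrically, player $\cY$'s payoff $\vec{x}^\top \mat{B}\vec{y}$ yields $\vec{u}_{\cY}^{(t)} = \mat{B}^\top \vec{x}^{(t)}$ (the convention that each player receives the $t=0$ utility makes the $t=1$ term well-defined, since $\vec{y}^{(0)} = \haty^{(0)}$ and $\vec{x}^{(0)} = \hatx^{(0)}$). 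Applying \Cref{prop:refined_rvu} to player $\cX$ with regularizer $\cR_{\cX}$ reproduces the first term $\diamreg{\cR_{\cX}}/\eta$ and the negative term $-\frac{1}{4\eta}\sum_{t=1}^T(\|\vec{x}^{(t)} - \hatx^{(t)}\|^2 + \|\vec{x}^{(t)} - \hatx^{(t-1)}\|^2)$ verbatim, leaving only the middle term $\eta \sum_{t=1}^T \|\vec{u}_{\cX}^{(t)} - \vec{u}_{\cX}^{(t-1)}\|_*^2$ to be recast.

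The key step is to trade the variation of the utilities for the variation of the opponent's iterates. By linearity, $\vec{u}_{\cX}^{(t)} - \vec{u}_{\cX}^{(t-1)} = \mat{A}(\vec{y}^{(t)} - \vec{y}^{(t-1)})$, and the definition $\|\mat{A}\|_{\op} = \sup_{\|\vec{u}\|\le 1}\|\mat{A}\vec{u}\|_*$ gives directly $\|\mat{A}(\vec{y}^{(t)}-\vec{y}^{(t-1)})\|_* \le \|\mat{A}\|_{\op}\,\|\vec{y}^{(t)}-\vec{y}^{(t-1)}\|$. Squaring, multiplying by $\eta$, and summing over $t$ yields exactly the middle term $\eta\|\mat{A}\|_{\op}^2\sum_{t=1}^T\|\vec{y}^{(t)}-\vec{y}^{(t-1)}\|^2$ claimed for $\reg_{\cX}^T$, completing the first inequality.

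For player $\cY$ the argument is symmetric, but one point warrants care: the variation is driven by $\mat{B}^\top(\vec{x}^{(t)}-\vec{x}^{(t-1)})$, so the coefficient that appears naturally is $\|\mat{B}^\top\|_{\op}$, whereas the statement is phrased with $\|\mat{B}\|_{\op}$. I would close this gap with the duality identity $\|\mat{B}^\top\|_{\op} = \|\mat{B}\|_{\op}$, obtained by writing $\|\mat{B}\|_{\op} = \sup_{\|\vec{u}\|\le 1}\sup_{\|\vec{v}\|\le 1}\langle \vec{v}, \mat{B}\vec{u}\rangle$ via the dual-norm characterization $\|\vec{w}\|_* = \sup_{\|\vec{v}\|\le 1}\langle \vec{v}, \vec{w}\rangle$, swapping the suprema, and using $\langle \vec{v}, \mat{B}\vec{u}\rangle = \langle \mat{B}^\top\vec{v}, \vec{u}\rangle$ to recognize $\sup_{\|\vec{v}\|\le 1}\|\mat{B}^\top\vec{v}\|_* = \|\mat{B}^\top\|_{\op}$ (here the assumption that both players use the same pair of dual norms $(\|\cdot\|,\|\cdot\|_*)$ is what lets a single quantity govern both spaces). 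I do not anticipate a genuine obstacle: the entire derivation is a one-line instantiation of \Cref{prop:refined_rvu} followed by the operator-norm inequality, and the only substantive observation is this transpose identity, which is precisely what justifies writing $\|\mat{B}\|_{\op}$ in place of $\|\mat{B}^\top\|_{\op}$.
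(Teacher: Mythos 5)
Your proposal is correct and takes essentially the same route as the paper's proof: identify the observed utilities as $\vec{u}_{\cX}^{(t)} = \mat{A}\vec{y}^{(t)}$ and $\vec{u}_{\cY}^{(t)} = \mat{B}^\top\vec{x}^{(t)}$, instantiate \Cref{prop:refined_rvu} for each player, bound the utility variation by $\|\mat{A}\|_{\op}\|\vec{y}^{(t)}-\vec{y}^{(t-1)}\|$ (resp.\ $\|\mat{B}^\top\|_{\op}\|\vec{x}^{(t)}-\vec{x}^{(t-1)}\|$), and invoke $\|\mat{B}^\top\|_{\op}=\|\mat{B}\|_{\op}$. The only difference is that you supply the short duality argument for this transpose identity, which the paper asserts without proof.
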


The next critical step consists of showing that approximate fixed points of \eqref{eq:OMD} under smooth regularization---in the sense of \eqref{eq:smoothness}---correspond to approximate Nash equilibria (recall \Cref{def:NE}) of the underlying bimatrix game, as we formalize below.

\begin{restatable}[Approximate Fixed Points of OMD]{proposition}{fixedp}
    \label{proposition:approx_stat}
    Consider a bimatrix game $(\mat{A}, \mat{B})$, and suppose that both players employ \eqref{eq:OMD} with learning rate $\eta > 0$ and a $G$-smooth regularizer. Then, if $\| \vec{x}^{(t)} - \hatx^{(t-1)} \|, \| \hatx^{(t)} - \vec{x}^{(t)} \| \leq \epsilon \eta$ and $\| \vec{y}^{(t)} - \haty^{(t-1)} \|, \| \haty^{(t)} - \vec{y}^{(t)} \| \leq \epsilon \eta$, the pair $(\vec{x}^{(t)}, \vec{y}^{(t)})$ is a $(2 \epsilon G \max \{ \Omega_{\cX}, \Omega_{\cY} \} + \epsilon \eta )$-approximate Nash equilibrium of $(\mat{A}, \mat{B})$.
\end{restatable}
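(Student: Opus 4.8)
The plan is to prove both approximate-Nash inequalities of \Cref{def:NE} by routing through the \emph{secondary} sequence of \eqref{eq:OMD}, whose update is a genuine proximal step against the true utility and therefore comes with a clean first-order optimality condition. By symmetry it suffices to establish the inequality for player $\cX$; the bound for player $\cY$ then follows by interchanging the roles of $(\cX, \mat{A}, \hatx)$ with $(\cY, \mat{B}^\top, \haty)$ and invoking the corresponding normalization $\max_{\vec{x}\in\cX}\|\mat{B}^\top\vec{x}\|_*\le 1$. Recall that in the game the utility vector fed to player $\cX$ at time $t$ is $\vec{u}^{(t)} = \mat{A}\vec{y}^{(t)}$, so that the deviation gap for player $\cX$ is exactly $\max_{\vec{x}\in\cX}\langle \vec{u}^{(t)}, \vec{x} - \vec{x}^{(t)}\rangle$, which I aim to upper bound by $2\epsilon G\,\Omega_{\cX} + \epsilon\eta$.

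First I would write down the variational inequality characterizing the maximizer $\hatx^{(t)}$ in the second line of \eqref{eq:OMD}: for every $\vec{x}\in\cX$,
\begin{equation*}
    \langle \eta\,\vec{u}^{(t)} - \nabla\cR(\hatx^{(t)}) + \nabla\cR(\hatx^{(t-1)}),\ \vec{x} - \hatx^{(t)}\rangle \le 0.
\end{equation*}
Rearranging and applying H\"older's inequality with the dual norms gives $\langle \vec{u}^{(t)}, \vec{x} - \hatx^{(t)}\rangle \le \frac{1}{\eta}\|\nabla\cR(\hatx^{(t)}) - \nabla\cR(\hatx^{(t-1)})\|_*\,\|\vec{x} - \hatx^{(t)}\|$. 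Here the $G$-smoothness hypothesis \eqref{eq:smoothness} is the crucial ingredient: it converts the gradient difference into $G\|\hatx^{(t)} - \hatx^{(t-1)}\|$, whereupon the two nearness hypotheses combine through the triangle inequality $\|\hatx^{(t)} - \hatx^{(t-1)}\| \le \|\hatx^{(t)} - \vec{x}^{(t)}\| + \|\vec{x}^{(t)} - \hatx^{(t-1)}\| \le 2\epsilon\eta$. Bounding the remaining factor $\|\vec{x} - \hatx^{(t)}\|$ by the diameter $\Omega_{\cX}$ (both points lie in $\cX$) yields $\langle\vec{u}^{(t)}, \vec{x} - \hatx^{(t)}\rangle \le 2\epsilon G\,\Omega_{\cX}$.

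It remains to transfer this estimate from the secondary iterate $\hatx^{(t)}$ to the primary iterate $\vec{x}^{(t)}$. I would split $\langle\vec{u}^{(t)}, \vec{x} - \vec{x}^{(t)}\rangle = \langle\vec{u}^{(t)}, \vec{x} - \hatx^{(t)}\rangle + \langle\vec{u}^{(t)}, \hatx^{(t)} - \vec{x}^{(t)}\rangle$ and control the second term by H\"older's inequality, $\langle\vec{u}^{(t)}, \hatx^{(t)} - \vec{x}^{(t)}\rangle \le \|\vec{u}^{(t)}\|_*\,\|\hatx^{(t)} - \vec{x}^{(t)}\|$; the normalization $\|\mat{A}\vec{y}^{(t)}\|_* \le 1$ together with the hypothesis $\|\hatx^{(t)} - \vec{x}^{(t)}\| \le \epsilon\eta$ bounds it by $\epsilon\eta$. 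Summing the two contributions gives a deviation gap of at most $2\epsilon G\,\Omega_{\cX} + \epsilon\eta$, and maximizing over $\vec{x}\in\cX$ establishes the first Nash inequality; the symmetric argument for $\cY$ then completes the proof with the stated constant $2\epsilon G\max\{\Omega_{\cX},\Omega_{\cY}\} + \epsilon\eta$.

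The bulk of the argument is routine once set up, so the genuine obstacle is conceptual: recognizing that the \emph{primary} update $\vec{x}^{(t)}$, which is computed against the \emph{prediction} $\vec{m}^{(t)}$ rather than the realized utility $\vec{u}^{(t)}$, does not itself satisfy a useful optimality condition against $\vec{u}^{(t)}$, and that one must instead exploit the secondary update $\hatx^{(t)}$, whose proximal step \emph{is} taken against $\vec{u}^{(t)}$, paying only an $O(\epsilon\eta)$ price at the end to return to $\vec{x}^{(t)}$. A minor care point is keeping the norm bookkeeping consistent: the gradient differences and utility vectors must be measured in $\|\cdot\|_*$ and the iterate displacements in $\|\cdot\|$, which is exactly the pairing delivered by $G$-smoothness and by the normalization of $\mat{A}$ and $\mat{B}$.
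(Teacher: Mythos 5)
Your proposal is correct and follows essentially the same route as the paper's own proof: the variational inequality characterizing the secondary update $\hatx^{(t)}$, H\"older's inequality combined with $G$-smoothness and the triangle inequality to bound $\|\nabla\cR(\hatx^{(t)}) - \nabla\cR(\hatx^{(t-1)})\|_*$ via $\|\hatx^{(t)} - \hatx^{(t-1)}\| \le 2\epsilon\eta$, the diameter bound $\|\vec{x} - \hatx^{(t)}\| \le \Omega_{\cX}$, and finally the transfer from $\hatx^{(t)}$ to $\vec{x}^{(t)}$ at a cost of $\epsilon\eta$ using the normalization $\|\mat{A}\vec{y}^{(t)}\|_* \le 1$. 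No gaps; the argument matches the paper step for step, including the symmetric treatment of player $\cY$.
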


Indeed, we show that when the iterates of \eqref{eq:OMD} do not change by much (relatively to the learning rate), each player is approximately best responding to the observed utility. Our argument crucially relies on the smoothness of the regularizer; it appears that \Cref{proposition:approx_stat} does not extend for nonsmooth regularizers such as negative entropy (which generates optimistic multiplicative weights update).

The following ingredient is where we rely on the two-player aspect of the underlying game. On a high level, we show that when the strategies of one of the players change fast over time, the other player ought to be ``moving'' rapidly as well.

\begin{restatable}{lemma}{balanced}
    \label{lemma:balanced}
Suppose that both players in a bimatrix game $(\mat{A}, \mat{B})$ employ \eqref{eq:OMD} with learning rate $\eta > 0$. Then, for any $T \in \N$,
\begin{equation*}
    \begin{split}
    \sum_{t=1}^T \| \vec{y}^{(t)} - \vec{y}^{(t-1)} \| \geq \frac{1}{2\eta \normset{\cX} \|\mat{A}\|_{\op}} \sum_{t=1}^T \left( \| \vec{x}^{(t)} - \hatx^{(t-1)} \|^2 + \| \hatx^{(t)} - \vec{x}^{(t)} \|^2 \right) - \frac{2}{\|\mat{A}\|_{\op}} ; \\
    \sum_{t=1}^T \| \vec{x}^{(t)} - \vec{x}^{(t-1)} \| \geq \frac{1}{2\eta \normset{\cY} \|\mat{B}\|_{\op}} \sum_{t=1}^T \left( \| \vec{y}^{(t)} - \haty^{(t-1)} \|^2 + \| \haty^{(t)} - \vec{y}^{(t)} \|^2 \right) - \frac{2}{\|\mat{B}\|_{\op}}.
    \end{split}
\end{equation*}
\end{restatable}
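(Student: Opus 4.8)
The plan is to prove the first inequality; the second is identical after interchanging the two players and $\mat{A} \leftrightarrow \mat{B}$ (using the normalization $\|\mat{B}^\top \vec{x}^{(t)}\|_* \le 1$ in place of $\|\mat{A}\vec{y}^{(t)}\|_* \le 1$). Recall that player $\cX$ runs \eqref{eq:OMD} against the utilities $\vec{u}^{(t)} = \mat{A}\vec{y}^{(t)}$ with prediction $\vec{m}^{(t)} = \mat{A}\vec{y}^{(t-1)}$, and that the normalization preceding \Cref{def:NE} gives $\|\mat{A}\vec{y}^{(t)}\|_* \le 1$ for every $t$. I would control the two summands of $\Sigma_{\cX}^T$ in \eqref{eq:not} separately, in each case trading a squared displacement of $\cX$ for the first-order path length $\sum_t \|\vec{y}^{(t)} - \vec{y}^{(t-1)}\|$ of the opponent.

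First I would handle $\|\hatx^{(t)} - \vec{x}^{(t)}\|$. Since $\hatx^{(t)}$ and $\vec{x}^{(t)}$ maximize the \emph{same} regularized objective centered at $\hatx^{(t-1)}$, differing only in their linear parts $\vec{u}^{(t)}$ and $\vec{m}^{(t)}$, the standard stability estimate for the proximal step—obtained by adding the two first-order optimality conditions and invoking $1$-strong convexity of $\cR_{\cX}$—yields $\|\hatx^{(t)} - \vec{x}^{(t)}\| \le \eta\|\vec{u}^{(t)} - \vec{m}^{(t)}\|_* \le \eta\|\mat{A}\|_{\op}\|\vec{y}^{(t)} - \vec{y}^{(t-1)}\|$. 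Bounding the remaining factor by the diameter, $\|\hatx^{(t)} - \vec{x}^{(t)}\| \le 2\normset{\cX}$, and summing already puts $\sum_t \|\hatx^{(t)} - \vec{x}^{(t)}\|^2$ in the desired form $\lesssim \eta\normset{\cX}\|\mat{A}\|_{\op}\sum_t \|\vec{y}^{(t)} - \vec{y}^{(t-1)}\|$.

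The main obstacle is the term $\|\vec{x}^{(t)} - \hatx^{(t-1)}\|^2$, the displacement of the primary iterate from its center. The first-order optimality condition for $\vec{x}^{(t)}$, evaluated at the comparator $\hatx^{(t-1)}$ and combined with strong convexity, gives $\|\vec{x}^{(t)} - \hatx^{(t-1)}\|^2 \le \eta\langle \mat{A}\vec{y}^{(t-1)}, \vec{x}^{(t)} - \hatx^{(t-1)}\rangle$. The difficulty is that the naive estimate $\langle \mat{A}\vec{y}^{(t-1)}, \vec{x}^{(t)} - \hatx^{(t-1)}\rangle \le \|\mat{A}\vec{y}^{(t-1)}\|_* \|\vec{x}^{(t)} - \hatx^{(t-1)}\|$ controls the displacement only by the \emph{magnitude} of the current utility rather than by its \emph{variation}, and summing it would merely yield an $O(\eta T)$ bound, losing all contact with $\cY$'s movement. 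To recover the variation I would split $\vec{x}^{(t)} - \hatx^{(t-1)} = (\hatx^{(t)} - \hatx^{(t-1)}) + (\vec{x}^{(t)} - \hatx^{(t)})$; the second piece is absorbed by the bound of the previous paragraph, and for the first I would apply summation by parts to $\sum_t \langle \mat{A}\vec{y}^{(t-1)}, \hatx^{(t)} - \hatx^{(t-1)}\rangle$. This produces two boundary terms of the form $\langle \mat{A}\vec{y}^{(\cdot)}, \hatx^{(\cdot)}\rangle$, each at most $\normset{\cX}$ by the normalization, together with an interior sum $\sum_t \langle \mat{A}(\vec{y}^{(t)} - \vec{y}^{(t-1)}), \hatx^{(t)}\rangle \le \normset{\cX}\|\mat{A}\|_{\op}\sum_t \|\vec{y}^{(t)} - \vec{y}^{(t-1)}\|$. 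It is exactly this Abel-summation step that converts utility magnitudes into utility \emph{increments}—i.e.\ into the path length of $\cY$—and the two uncancelled boundary terms are what give rise to the additive slack $-2/\|\mat{A}\|_{\op}$ in the statement.

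Combining the two pieces, $\sum_t \|\vec{x}^{(t)} - \hatx^{(t-1)}\|^2 \le \eta\normset{\cX}\|\mat{A}\|_{\op}\sum_t \|\vec{y}^{(t)} - \vec{y}^{(t-1)}\|$ plus an $O(\eta\normset{\cX})$ constant, and adding the contribution of $\|\hatx^{(t)} - \vec{x}^{(t)}\|^2$ gives $\Sigma_{\cX}^T \le 2\eta\normset{\cX}\|\mat{A}\|_{\op}\sum_t \|\vec{y}^{(t)} - \vec{y}^{(t-1)}\| + 4\eta\normset{\cX}$; dividing by $2\eta\normset{\cX}\|\mat{A}\|_{\op}$ and rearranging yields the claim. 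The constant bookkeeping—absorbing a lower-order $\eta^2$ factor using $\eta = O(1)$ and matching the exact coefficients—is routine, and I expect the summation-by-parts manipulation to be the only genuinely nontrivial step, since everything else is a direct consequence of optimality and strong convexity.
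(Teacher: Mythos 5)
Your high-level mechanism is the right one, and it is in essence the paper's own: the paper also converts utility \emph{magnitudes} into utility \emph{increments} by a discrete integration by parts, implemented as the exactly telescoping potential $\langle \hatx^{(t)}, \mat{A}\vec{y}^{(t)}\rangle$, whose two uncancelled boundary terms are precisely the source of the $-2/\|\mat{A}\|_{\op}$ slack, just as you predict for your Abel summation. Your individual estimates are also correct: the proximal stability bound $\|\hatx^{(t)} - \vec{x}^{(t)}\| \le \eta\|\mat{A}\|_{\op}\|\vec{y}^{(t)} - \vec{y}^{(t-1)}\|$, the variational-inequality bound $\|\vec{x}^{(t)} - \hatx^{(t-1)}\|^2 \le \eta\langle\mat{A}\vec{y}^{(t-1)}, \vec{x}^{(t)} - \hatx^{(t-1)}\rangle$, and the summation-by-parts estimate all hold.

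The gap is in the final combination, and it is not ``routine bookkeeping.'' Adding up your own bounds, the coefficient multiplying $\sum_t\|\vec{y}^{(t)} - \vec{y}^{(t-1)}\|$ is not $2\eta\normset{\cX}\|\mat{A}\|_{\op}$ but at least $3\eta\normset{\cX}\|\mat{A}\|_{\op} + \eta^2\|\mat{A}\|_{\op}$: you pay $2\eta\normset{\cX}\|\mat{A}\|_{\op}$ from $\sum_t\|\hatx^{(t)}-\vec{x}^{(t)}\|^2$ (diameter times stability), $\eta\normset{\cX}\|\mat{A}\|_{\op}$ again from the interior Abel sum, and $\eta^2\|\mat{A}\|_{\op}$ from the cross term $\langle\mat{A}\vec{y}^{(t-1)},\vec{x}^{(t)}-\hatx^{(t)}\rangle$. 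After rearranging, this yields a leading constant of roughly $\frac{1}{3\eta\normset{\cX}\|\mat{A}\|_{\op}}$ (and only under the extra assumption $\eta \lesssim \normset{\cX}$, which the lemma does not make), rather than the stated $\frac{1}{2\eta\normset{\cX}\|\mat{A}\|_{\op}}$; for large $\Sigma_{\cX}^T$ your inequality is strictly weaker than the lemma's and does not imply it. The structural reason for the loss is that you bound the two squared displacements \emph{separately}, so you pay for $\cY$'s path length twice. The paper avoids this by pairing: strong concavity of the proximal objective gives a value-gap inequality for $\vec{x}^{(t)}$ against the comparator $\hatx^{(t-1)}$ and for $\hatx^{(t)}$ against the comparator $\vec{x}^{(t)}$; summing the pair cancels the Bregman term $D_{\cR_{\cX}} \infdivx{\vec{x}^{(t)}}{\hatx^{(t-1)}}$, so both squared displacements appear simultaneously with factor $\frac{1}{2\eta}$, set against the single cross term $\langle\vec{x}^{(t)},\mat{A}(\vec{y}^{(t-1)}-\vec{y}^{(t)})\rangle$ and the telescoping potential. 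That pairing is the missing idea needed to obtain the stated constants for every $\eta > 0$. (Your weaker version would still suffice for the downstream argument in \Cref{theorem:neg-reg-detailed} after propagating the worse constants, but it does not prove the lemma as stated.)
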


The intuition is that as long as one player is ``moving'' substantially faster than the other player, its utility will be monotonically increasing as the (repeated) game progresses. But this cannot occur for too long---by a potential argument---since the utility of each player is bounded. As a special case of this phenomenon, we point out that if one of the players remains stationary over time, then the other player should eventually converge to a best response. This is in stark contrast to games with more than two players, as we further explain in \Cref{remark:multi}. We are now ready to state the main technical theorem.

\begin{theorem}[Linear Decay of Regret; Full Version in \Cref{theorem:neg-reg-detailed}]
    \label{theorem:neg-reg}
    Suppose that both players in a bimatrix game $(\mat{A}, \mat{B})$ employ \eqref{eq:OMD} with smooth regularizer, learning rate $\eta = O(\epsilon^2)$ and $T = \Omega\left( \frac{1}{\epsilon^4 \eta^2} \right)$, for a sufficiently small fixed $\epsilon > 0$. Then, if the dynamics do not reach an $O(\epsilon)$-approximate NE, then 
    \begin{equation*}
        \max \{ \reg_{\cX}^T, \reg_{\cY}^T \} \leq - \Omega( \epsilon^4 \eta T).
    \end{equation*}
\end{theorem}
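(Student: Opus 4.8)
The plan is to argue in the contrapositive direction supplied by \Cref{proposition:approx_stat}. Since the dynamics never reach an $O(\epsilon)$-approximate Nash equilibrium, at every time $t \in \range{T}$ at least one of the four increments $\|\vec{x}^{(t)} - \hatx^{(t-1)}\|$, $\|\hatx^{(t)} - \vec{x}^{(t)}\|$, $\|\vec{y}^{(t)} - \haty^{(t-1)}\|$, $\|\haty^{(t)} - \vec{y}^{(t)}\|$ must exceed a threshold of order $\epsilon\eta$ (fixing the hidden constant so that \Cref{proposition:approx_stat} certifies only an $O(\epsilon)$-NE). Partitioning $\range{T}$ into the set $T_\cX$ of steps where an $\cX$-increment is large and the set $T_\cY$ where a $\cY$-increment is large, I get $\abs{T_\cX} + \abs{T_\cY} \ge T$; since these large increments feed exactly into the second-order path lengths of \eqref{eq:not}, this yields $\Sigma_\cX^T \ge \Omega(\abs{T_\cX}\,\epsilon^2\eta^2)$ and $\Sigma_\cY^T \ge \Omega(\abs{T_\cY}\,\epsilon^2\eta^2)$, so at least one of the two is $\Omega(T\epsilon^2\eta^2)$.

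The crux is to promote this into a bound on \emph{both} path lengths. Suppose $\Sigma_\cX^T \ge \Omega(T\epsilon^2\eta^2)$, the other case being symmetric. \Cref{lemma:balanced} turns this into the lower bound $\sum_t \|\vec{y}^{(t)} - \vec{y}^{(t-1)}\| \ge \frac{1}{2\eta\normset{\cX}\|\mat{A}\|_{\op}}\Sigma_\cX^T - \frac{2}{\|\mat{A}\|_{\op}} = \Omega(T\epsilon^2\eta)$ on player $\cY$'s first-order path length, the additive constant being absorbed once $T = \Omega(1/(\epsilon^4\eta^2))$ is large. Chaining Cauchy--Schwarz, $\big(\sum_t \|\vec{y}^{(t)} - \vec{y}^{(t-1)}\|\big)^2 \le T \sum_t \|\vec{y}^{(t)} - \vec{y}^{(t-1)}\|^2$, with the elementary inequality $\sum_t \|\vec{y}^{(t)} - \vec{y}^{(t-1)}\|^2 \le 4\,\Sigma_\cY^T$ (insert $\haty^{(t-1)}$ and use $\vec{y}^{(0)} = \haty^{(0)}$), I obtain $\Sigma_\cY^T \ge \Omega(T\epsilon^4\eta^2)$. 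The squaring in this chain is precisely where the exponent degrades from $\epsilon^2$ to $\epsilon^4$, so in all cases $\min\{\Sigma_\cX^T, \Sigma_\cY^T\} \ge \Omega(T\epsilon^4\eta^2)$.

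It remains to make each regret individually negative, where the obstacle is the positive cross term in \Cref{cor:rvu-our}: $\reg_{\cX}^T$ carries $+\eta\|\mat{A}\|_{\op}^2 \sum_t\|\vec{y}^{(t)} - \vec{y}^{(t-1)}\|^2$, which is player $\cY$'s movement and is \emph{not} cancelled by $\reg_{\cX}^T$'s own negative term $-\frac{1}{4\eta}\Sigma_\cX^T$. I would control it by reading \Cref{lemma:balanced} in the opposite direction: combining its lower bound on $\sum_t\|\vec{y}^{(t)}-\vec{y}^{(t-1)}\|$ with $\sum_t\|\vec{y}^{(t)}-\vec{y}^{(t-1)}\| \le 2\sqrt{T\,\Sigma_\cY^T}$ gives $\Sigma_\cX^T \le O(\eta\sqrt{T\,\Sigma_\cY^T})$ up to an additive $O(\eta)$, and symmetrically $\Sigma_\cY^T \le O(\eta\sqrt{T\,\Sigma_\cX^T})$. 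Plugging this upper bound on $\Sigma_\cY^T$ (via $\sum_t\|\vec{y}^{(t)}-\vec{y}^{(t-1)}\|^2 \le 4\Sigma_\cY^T$) into the cross term bounds it by $O(\eta^2\sqrt{T\,\Sigma_\cX^T})$; a short calculation shows this is at most $\frac{1}{8\eta}\Sigma_\cX^T$ exactly when $\Sigma_\cX^T \gtrsim \eta^6 T$, which follows from $\Sigma_\cX^T \ge \Omega(T\epsilon^4\eta^2)$ together with $\eta = O(\epsilon^2) = O(\epsilon)$. Hence $\reg_{\cX}^T \le \frac{\diamreg{\cR_\cX}}{\eta} - \frac{1}{8\eta}\Sigma_\cX^T \le \frac{\diamreg{\cR_\cX}}{\eta} - \Omega(T\epsilon^4\eta)$, and $T = \Omega(1/(\epsilon^4\eta^2))$ renders $\diamreg{\cR_\cX}/\eta$ negligible, giving $\reg_{\cX}^T \le -\Omega(\epsilon^4\eta T)$. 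Running the identical argument with $\cX,\cY$ and $\mat{A},\mat{B}$ swapped handles $\reg_{\cY}^T$, so $\max\{\reg_{\cX}^T, \reg_{\cY}^T\} \le -\Omega(\epsilon^4\eta T)$.

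The main difficulty I anticipate---and where the two-player structure is indispensable---is the \emph{double} use of \Cref{lemma:balanced}: once to propagate ``one player moves a lot'' into ``both second-order path lengths are large,'' and once more, in its reverse reading, to dominate each player's positive variation term by the \emph{other} player's second-order path length so that the $\rvu$ negative term wins. The delicate bookkeeping is to verify that $\eta = O(\epsilon^2)$ and $T = \Omega(1/(\epsilon^4\eta^2))$ are exactly the thresholds that simultaneously close the ``both path lengths large'' estimate and the ``cross term dominated'' estimate.
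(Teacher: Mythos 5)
Your proposal is correct, and its first half is exactly the paper's argument: the contrapositive of \Cref{proposition:approx_stat} gives $\Sigma_{\cX}^T + \Sigma_{\cY}^T \geq \epsilon^2\eta^2 T$, a case split makes one of the two second-order path lengths $\Omega(\epsilon^2\eta^2 T)$, and then \Cref{lemma:balanced} followed by Cauchy--Schwarz (with the Young-type insertion of $\haty^{(t-1)}$ and $\vec{y}^{(0)} = \haty^{(0)}$) propagates this to the other player at the cost of squaring, $\epsilon^2 \rightsquigarrow \epsilon^4$ --- the same degradation, for the same reason, as in \Cref{theorem:neg-reg-detailed}. Where you genuinely depart from the paper is the crux step of absorbing the positive cross terms in \Cref{cor:rvu-our}. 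The paper handles the two players asymmetrically: for the player whose path length dominates, it uses the case hypothesis $\Sigma_{\cX}^T \geq \Sigma_{\cY}^T$ together with $\eta \leq \tfrac{1}{4\|\mat{A}\|_{\op}}$ to cancel the cross term outright (which in fact yields the stronger bound $-\Omega(\epsilon^2\eta T)$ for that player); for the other player it invokes \Cref{lemma:stability}, bounding the cross term pointwise by $\eta\|\mat{B}\|_{\op}^2 \cdot 9\eta^2 T$, and this is precisely where the hypothesis $\eta = O(\epsilon^2)$ is consumed (to ensure $\eta^3 T \lesssim \epsilon^4 \eta T$). You instead never touch \Cref{lemma:stability}: reading \Cref{lemma:balanced} in reverse for both players gives the coupled bounds $\Sigma_{\cX}^T \leq O\bigl(\eta\sqrt{T\Sigma_{\cY}^T}\bigr) + O(\eta)$ and $\Sigma_{\cY}^T \leq O\bigl(\eta\sqrt{T\Sigma_{\cX}^T}\bigr) + O(\eta)$, and each cross term is then dominated by $\tfrac{1}{8\eta}$ times that player's own negative term whenever $\Sigma^T \gtrsim \eta^6 T$, which your lower bound $\Omega(\epsilon^4\eta^2 T)$ supplies once $\eta \lesssim \epsilon$. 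This buys a fully symmetric treatment of the two players, eliminates one lemma, and shows the cross-term absorption needs only $\eta = O(\epsilon)$ rather than $O(\epsilon^2)$ (the quadratic dependence being needed in the paper only through the stability route); what the paper's route buys in exchange is a more elementary pointwise estimate and the sharper $-\Omega(\epsilon^2\eta T)$ decay for the dominant player. Two bookkeeping items you should make explicit in a full write-up: the additive $O(\eta)$ residuals in the reverse-balanced bounds produce $O(\eta^2)$ terms in the cross-term estimate that must also be absorbed (immediate, since $\Sigma^T \geq \Omega(\epsilon^4\eta^2 T) = \Omega(1)$ under your choice of $T$), and all suppressed constants depend on $\normset{\cX}, \normset{\cY}, \|\mat{A}\|_{\op}, \|\mat{B}\|_{\op}$, consistent with the conventions of \Cref{theorem:neg-reg-detailed}.
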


By virtue of \Cref{theorem:folklore} and \Cref{proposition:approx_stat}, \Cref{theorem:neg-reg} immediately implies \Cref{theorem:main-abridged}. An illustration of the linear decay of regret in a bimatrix game can be seen in \Cref{fig:regret-trajectories}. Before we sketch the proof of \Cref{theorem:neg-reg}, let us point out the following useful lemma.

\begin{restatable}[Stability of OMD]{lemma}{stability}
    \label{lemma:stability}
    Suppose that both players employ \eqref{eq:OMD} with learning rate $\eta > 0$. Then, for any $t \in \N$,
    \begin{equation*}
        \begin{split}
        \| \vec{x}^{(t)} - \vec{x}^{(t-1)} \| \le 3 \eta; \\    
        \| \vec{y}^{(t)} - \vec{y}^{(t-1)} \| \le 3 \eta .
        \end{split}
    \end{equation*}
\end{restatable}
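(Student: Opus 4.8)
The plan is to reduce everything to a single ``one-step displacement'' estimate for a proximal map and then patch the primary and secondary sequences together with the triangle inequality. The key fact is that each of the two updates in \eqref{eq:OMD} moves the iterate away from its anchor $\hatx^{(t-1)}$ by at most $\eta$ times the dual norm of the relevant linear coefficient.

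First I would write down the first-order optimality (variational inequality) condition for the primary update $\vec{x}^{(t)} = \argmax_{\vec{x} \in \cX} \{ \langle \vec{x}, \vec{m}^{(t)} \rangle - \frac{1}{\eta} D_{\cR} \infdivx{\vec{x}}{\hatx^{(t-1)}} \}$. Since $\cX$ is convex and the objective concave and differentiable, optimality gives, for every $\vec{x} \in \cX$,
\[
\langle \vec{m}^{(t)} - \tfrac{1}{\eta}\bigl(\nabla \cR(\vec{x}^{(t)}) - \nabla \cR(\hatx^{(t-1)})\bigr), \vec{x} - \vec{x}^{(t)} \rangle \le 0.
\]
Instantiating $\vec{x} = \hatx^{(t-1)}$, rearranging, and then invoking the $1$-strong convexity of $\cR$ (which yields $\langle \nabla \cR(\vec{x}^{(t)}) - \nabla \cR(\hatx^{(t-1)}), \vec{x}^{(t)} - \hatx^{(t-1)} \rangle \ge \|\vec{x}^{(t)} - \hatx^{(t-1)}\|^2$) together with the bound $\langle \vec{m}^{(t)}, \vec{x}^{(t)} - \hatx^{(t-1)}\rangle \le \|\vec{m}^{(t)}\|_* \|\vec{x}^{(t)} - \hatx^{(t-1)}\|$, I obtain $\|\vec{x}^{(t)} - \hatx^{(t-1)}\| \le \eta \|\vec{m}^{(t)}\|_*$. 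The identical argument applied to the secondary update gives $\|\hatx^{(t)} - \hatx^{(t-1)}\| \le \eta \|\vec{u}^{(t)}\|_*$.

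Next I would invoke the normalization. For player $\cX$ the observed utility is $\vec{u}^{(t)} = \mat{A}\vec{y}^{(t)}$ and the prediction is $\vec{m}^{(t)} = \vec{u}^{(t-1)} = \mat{A}\vec{y}^{(t-1)}$; since $\max_{\vec{y} \in \cY}\|\mat{A}\vec{y}\|_* \le 1$, both dual norms are at most $1$, so each one-step displacement is at most $\eta$. Finally, decomposing along the anchors,
\[
\|\vec{x}^{(t)} - \vec{x}^{(t-1)}\| \le \|\vec{x}^{(t)} - \hatx^{(t-1)}\| + \|\hatx^{(t-1)} - \hatx^{(t-2)}\| + \|\hatx^{(t-2)} - \vec{x}^{(t-1)}\| \le \eta + \eta + \eta = 3\eta,
\]
where the three summands are respectively the primary displacement at time $t$, the secondary displacement at time $t-1$, and the primary displacement at time $t-1$. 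The boundary case $t = 1$ is immediate from $\vec{x}^{(0)} = \hatx^{(0)}$, giving $\|\vec{x}^{(1)} - \vec{x}^{(0)}\| = \|\vec{x}^{(1)} - \hatx^{(0)}\| \le \eta$. The bound for player $\cY$ is symmetric, using $\max_{\vec{x}\in\cX}\|\mat{B}^\top \vec{x}\|_* \le 1$.

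The computation is essentially routine, so I do not expect a genuine obstacle; the only points requiring care are reading off the optimality condition for the prox step correctly and choosing the anchors $\hatx^{(t-1)}, \hatx^{(t-2)}$ so that the triangle-inequality chain telescopes into exactly three one-step terms. The constant $3\eta$ is forced precisely by this three-link path, and the whole lemma rests on the standard observation that, under strong convexity, the proximal map is $\eta\|\cdot\|_*$-Lipschitz in its linear part, combined with the fact that the normalization caps every such linear part at dual norm $1$.
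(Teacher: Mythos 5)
Your proof is correct and follows essentially the same route as the paper's: a per-step bound $\|\vec{x}^{(t)} - \hatx^{(t-1)}\| \le \eta$ and $\|\hatx^{(t)} - \hatx^{(t-1)}\| \le \eta$ obtained from strong convexity plus the normalization, followed by the identical three-term triangle inequality and the same treatment of $t=1$. The only cosmetic difference is that you derive the key inequality $\langle \vec{m}^{(t)}, \vec{x}^{(t)} - \hatx^{(t-1)}\rangle \ge \frac{1}{\eta}\|\vec{x}^{(t)} - \hatx^{(t-1)}\|^2$ from the first-order variational inequality plus gradient monotonicity, whereas the paper gets it from the quadratic growth of the strongly concave objective value at its maximizer; these are interchangeable.
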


\begin{proof}[Sketch Proof of \Cref{theorem:neg-reg}]
When the iterates of \eqref{eq:OMD} are $\Omega(\epsilon)$ from being a Nash equilibrium, \Cref{proposition:approx_stat} implies that $ \Sigma_{\cX}^T + \Sigma_{\cY}^T = \Omega(\epsilon^2 \eta^2 T)$, where we used the notation of \eqref{eq:not} for the second-order path lengths. Now suppose that $\Sigma_{\cX}^T \geq \Sigma_{\cY}^T$. Then, using \Cref{cor:rvu-our} we get that $\reg_{\cX}^T = - \Omega(\epsilon^2 \eta T)$. For the regret of player $\cY$, we first use \Cref{lemma:balanced} to obtain that $\Sigma_{\cY}^T = \Omega(\epsilon^4 \eta^2 T)$. Finally, using \Cref{cor:rvu-our} and \Cref{lemma:stability} we can conclude that $\reg_{\cY}^T \leq \frac{\diamreg{\cR_{\cY}}}{\eta} - \Omega(\epsilon^4 \eta T) = - \Omega(\epsilon^4 \eta T)$.
\end{proof}

While the dichotomy of \Cref{theorem:main-abridged} is based on whether only a \emph{single} iterate is an $\epsilon$-approximate Nash equilibrium, our techniques also directly give analogous guarantees dependening on whether \emph{most}---say $99\%$---of the iterates are $\epsilon$-approximate Nash equilibria, as we formalize in \Cref{cor:mostiter}.

\begin{remark}[Extensive-Form Games]
    \Cref{theorem:main-abridged} has direct implications for \emph{normal-form coarse correlated equilibria (NFCCE)}~\citep{Moulin78:Strategically} in extensive-form games using the \emph{sequence-form strategy representation}~\citep{Romanovskii62:Reduction,VonStengel96:Efficient,Koller96:Efficient}.
\end{remark}

\begin{remark}[Multiplayer Games]
    \label{remark:multi}
\Cref{theorem:main-abridged} does not extend to arbitrary games with $p \ge 3$. To see this, consider a $3$-player game for which the utility of player $3$ does not depend on the strategies of the other players. Then, the regret of player $3$ will be strictly positive---as long as the initialization differs from the optimal strategy. So, even if the dynamics are far from Nash equilibria, the CCE gap will always be strictly positive. At a superficial level, this issue occurs in games where the strategic interactions form, in some sense, multiple ``connected components''. Nevertheless, characterizing the multiplayer games for which \Cref{theorem:main-abridged} holds is an interesting question for the future.
\end{remark}
\section{Experiments}
\label{section:experiments}

In this section we provide experiments supporting our theoretical findings. We start by analyzing the CCE and the behavior of \eqref{eq:OGD} in a simple bimatrix NFG in \Cref{subsection:example}, while in \Cref{subsection:benchmark} we experiment with several benchmark games used in the EFG-solving literature.

\subsection{An Illustrative Example}
\label{subsection:example}

First, we study the $3 \times 3$ bimatrix normal-form game $(\mat{A}, \mat{B})$, where
\begin{equation}
    \label{eq:bimatrix}
    \mat{A} \defeq 
    \begin{bmatrix}
    1 & 0 & 0 \\
    -1 & 1 & 0 \\
    0 & 0 & 1
    \end{bmatrix};
    \mat{B} \defeq 
    \begin{bmatrix}
    0 & 1 & 0 \\
    0 & 0 & 1 \\
    1 & 0 & 0
    \end{bmatrix}.
\end{equation}

This game has a unique Nash equilibrium $(\Vec{x}^*, \Vec{y}^*)$ such that $\Vec{x}^* = (\frac{1}{3}, \frac{1}{3}, \frac{1}{3})$ and $\Vec{y}^* = (\frac{1}{4}, \frac{1}{2}, \frac{1}{4})$~\citep{Avis10:Enumeration}. Moreover, $(\Vec{x}^*, \Vec{y}^*)$ secures a social welfare $\SW(\Vec{x}^*, \Vec{y}^*) \defeq (\Vec{x}^*)^\top \mat{A} \Vec{y}^* + (\Vec{x}^*)^\top \mat{B} \Vec{y}^* = \frac{1}{4} + \frac{1}{3} \approx 0.5833$. 
\begin{wrapfigure}{r}{10cm}
    \centering
    \includegraphics[scale=0.5]{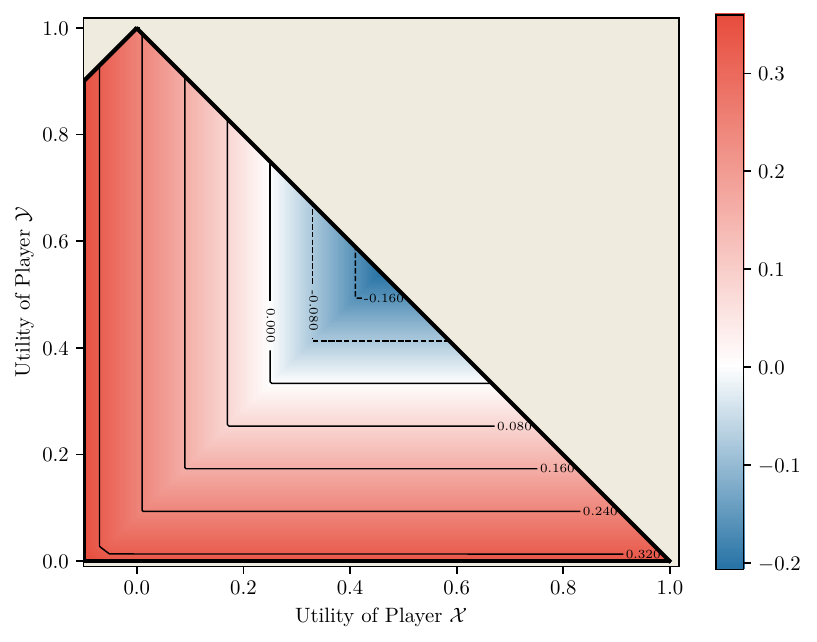}
    \caption{The maximum incentive-compatibility parameter of a CCE which guarantees a given pair of utilities.}
    \label{fig:contours}
\end{wrapfigure}
On the other hand, it is easy to see that there exists an exact CCE $\Vec{\mu}^*$ such that $\SW(\Vec{\mu}^*) = 1$. In fact, this social welfare is optimal even without any incentive-compatibility constraints. Furthermore, using a linear programming solver, we find that the \emph{strongest} CCE (in the sense of \Cref{def:strong-CCE}) has a parameter of roughly $0.2083$. The entire landscape of CCE associated with the bimatrix game \eqref{eq:bimatrix} is illustrated in \Cref{fig:contours}. The blue region corresponds to strong CCE, under which \emph{both} players obtain a high utility. On the other hand, configurations for which one of the players receives low utility are not incentive compatible.

Next, we focus on the behavior of the \eqref{eq:OMD} dynamics. We let both players employ Euclidean regularization and learning rate $\eta \defeq 0.1$. The convergence to CCE of the induced \eqref{eq:OGD} dynamics is illustrated in \Cref{fig:regret-trajectories}. In particular, after $T = 1000$ iterations the average correlated distribution of play $\Bar{\Vec{\mu}}$ reads (with precision up to $4$ decimal places)
\begin{equation*}
    \Bar{\Vec{\mu}} \approx
    \begin{bmatrix}
    0.1594 & 0.1778 & 0.0048 \\
    0.0029 & 0.1614 & 0.1607 \\
    0.1642 & 0.0075 & 0.1613
    \end{bmatrix}.
\end{equation*}
This correlated distribution secures a social welfare of $\SW(\Bar{\Vec{\mu}}) \approx 0.4793 + 0.5027 = 0.9819$. Thus, $\Bar{\Vec{\mu}}$ is near-optimal in terms of the obtained social welfare. As such, it substantially outperforms the efficiency of the Nash equilibrium $(\Vec{x}^*, \Vec{y}^*)$. Moreover, we see that $\Bar{\Vec{\mu}}$ is (approximately) a $0.1525$-strong CCE. Indeed, for player $\cX$ the maximum possible utility attainable from a unilateral deviation is roughly $0.3268$, compared to $0.4793$ obtained under $\Bar{\Vec{\mu}}$; for player $\cY$ the maximum utility from a unilateral deviation is roughly $0.3420$, compared to $0.5027$. It is worth noting that \eqref{eq:OGD} does \emph{not} converge to the strongest possible CCE of the game, even under different random initializations. The results described here are robust to different initializations---although under the definition of \eqref{eq:OGD} each player should start from the uniform distribution. 

Finally, let us elaborate on \Cref{fig:NE-CCE}. The left image illustrates the Nash gap of the average strategies of \eqref{eq:OGD} (with $\eta \defeq 0.1$) in the zero-sum game $(\mat{A}, -\mat{A})$, while the right image shows the CCE gap of the average correlated distribution of play in the bimatrix game $(\mat{A}, \mat{B})$, as given in \eqref{eq:bimatrix}.

\begin{figure}[!ht]
    \centering
    \includegraphics[scale=.8]{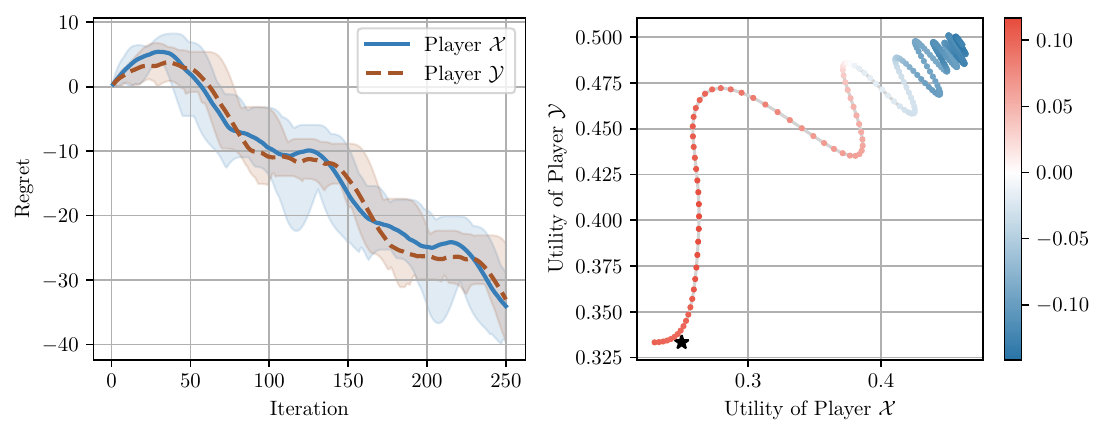}
    \caption{The convergence of \eqref{eq:OGD} dynamics to CCE in the bimatrix game $(\mat{A}, \mat{B})$ introduced in \eqref{eq:bimatrix}. On the left image we plot the regret of each player under different random initializations; we recall that the maximum average regret is tantamount to the CCE gap. We see that after about $60$ iterations both players experience \emph{negative regret}---implying \emph{exact} convergence to CCE. Furthermore, their regret decays linearly over time; this translates to convergence to strong CCE, as illustrated in the right image. More precisely, the color of each point (in the right image) corresponds to the CCE gap at the given iterate. Even when the initialization is ``close'' to the Nash equilibrium, the dynamics manage to avoid it in search of more efficient outcomes. In fact, the limit CCE of the dynamics yields near-optimal social welfare.}
    \label{fig:regret-trajectories}
\end{figure}

\subsection{Benchmark Games}
\label{subsection:benchmark}

Next, we illustrate the convergence of \eqref{eq:OMD} on several benchmark bimatrix EFGs; namely: (i) \emph{Sheriff}~\citep{Farina19:Correlation}; (ii) \emph{Liar's Dice}~\citep{Lisy15:Online}; (iii) \emph{Battleship}~\citep{Farina19:Correlation}; and (iv) \emph{Goofspiel}~\citep{Ross71:Goofspiel}. A detailed description of the game instances we used for our experiments is included in \Cref{appendix:description}.

We instantiated \eqref{eq:OMD} with Euclidean regularization. After a very mild tuning process, we chose for all games a (time-invariant) learning rate of $\eta = (2 \max\{\|\mat{A}\|_2, \|\mat{B}\|_2\})^{-1}$; here, $\|\cdot\|_2$ stands for the spectral norm of the corresponding matrix. For all games the initialization is chosen so that $\hatx^{(0)} \defeq \argmin_{\hatx \in \cX} \cR_{\cX}(\hatx) $ and $\haty^{(0)} \defeq \argmin_{\haty \in \cX} \cR_{\cY}(\haty)$, with the sole exception of Battleship for which that initialization is (virtually) already a Nash equilibrium. In light of this, for Battleship we initialized the dynamics in some arbitrary deterministic strategies; we stress that the conclusions derived here are robust to different initializations. Our results are summarized in \Cref{fig:games}.

\begin{figure}[!ht]
    \centering
    \includegraphics[scale=.69]{./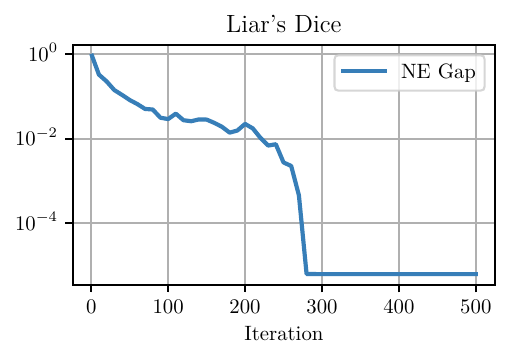}
    \includegraphics[scale=.69]{./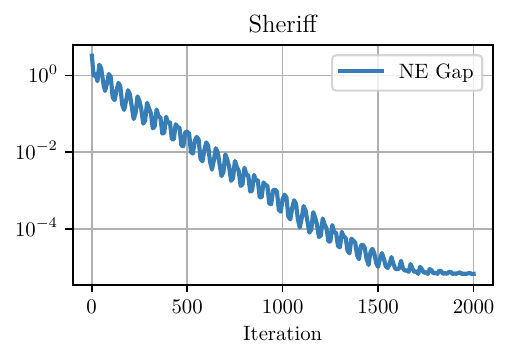}
    \includegraphics[scale=.69]{./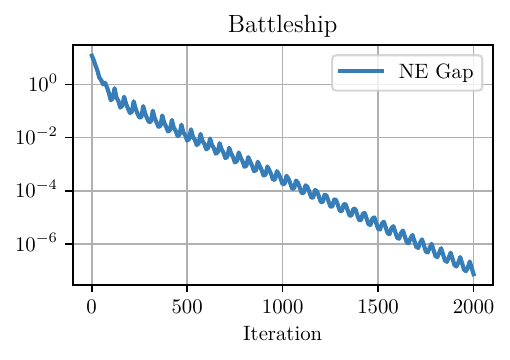}
    \includegraphics[scale=.69]{./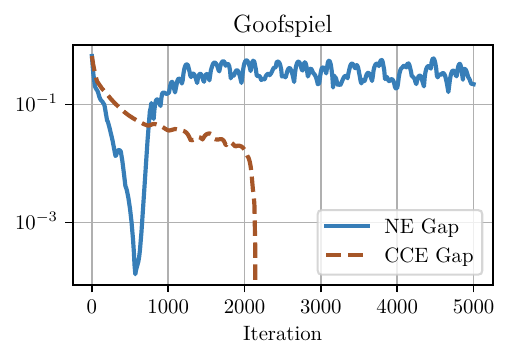}
    \caption{The NE gap of the last iterate and the CCE gap of the average correlated distribution of play under \eqref{eq:OGD} in different benchmark games.}
    \label{fig:games}
\end{figure}

From these benchmark games, only Liar's Dice is constant-sum---in fact, zero-sum. Hence, as expected, the NE gap of the \eqref{eq:OGD} dynamics essentially converges to $0$. Perhaps surprisingly, the same appears to hold for both Sheriff and Battleship. On the other hand, the dynamics exhibit a remarkably different behavior in Goofspiel. Indeed, although initially the dynamics appear to gradually converge to a Nash equilibrium, after about $600$ iterations the NE gap of the last iterate rapidly increases. Afterwards, the CCE gap starts to decay remarkably fast, eventually leading to a strong CCE. Specifically, after $T = 5000$ iterations the average correlated distribution of play is (roughly) a $ 0.0637$-strong CCE. These results are consistent with the predictions of \Cref{theorem:main-abridged}.
\section{Discussion and Open Problems}

Our primary contribution was to establish a new characterization for the convergence properties of OMD---an uncoupled no-regret learning algorithm---when employed by both players in a general-sum game: OMD either reaches arbitrarily close to a Nash equilibrium, or, otherwise, both players experience $- \Omega(T)$ regret. Our results open several interesting avenues for future research. Indeed, below we emphasize on certain key questions.

\begin{itemize}
    \item Can we extend \Cref{theorem:main-abridged} from coarse correlated equilibria to \emph{correlated equilibria}? At the very least, such an extension would be particularly challenging since all the known \emph{no-internal-regret} algorithms used for converging to correlated equilibria involve the stationary distribution of a Markov chain. Even $\rvu$ bounds for no-internal-regret dynamics are not known in the literature. While the recent reduction in~\citep{Anagnostides21:Near} may seem useful for such purposes, their technique only applies for \eqref{eq:OMD} under entropic regularization. In contrast, our current argument crucially relies on the smoothness of the regularizer.
    \item For which classes of multiplayer games would \Cref{theorem:main-abridged} hold? As we pointed out in \Cref{remark:multi}, while \Cref{theorem:main-abridged} does not extend in arbitrary multiplayer games, it is still interesting to give sufficient conditions under which our results would carry over. 
    \item Can we characterize the bimatrix games for which \eqref{eq:OMD} exhibits last-iterate convergence? It should be noted that recent results seem to suggest that such a characterization may be too hard to obtain in general~\citep{Andrade21:Learning}.
    \item Finally, can we improve \Cref{theorem:main-abridged} in terms of the dependence on $T$ and $\eta$? For example, it would be interesting to extend \Cref{theorem:main-abridged} under a learning rate that does not depend on $\epsilon$. Some form of coupling in the spirit of \emph{alternation}~\citep{Tammelin15:Solving} could be useful in that direction.
\end{itemize}

\section*{Acknowledgements}

We are grateful to anonymous NeurIPS reviewers for many helpful comments. Ioannis Panageas is supported by a start-up grant. Part of this work was done while Ioannis Panageas was visiting the Simons Institute for the Theory of Computing. Tuomas Sandholm is supported by the National Science Foundation under grants IIS-1901403 and CCF-1733556.

\printbibliography
\clearpage

\appendix

\section{Omitted Proofs}
\label{appendix:proofs}

In this section we include all the proofs deferred from the main body. For the convenience of the reader, all claims will be restated. Before we proceed with the proof of \Cref{theorem:folklore}, let us point out some additional notational conventions. First, with a standard abuse of notation, we overload
\begin{equation*}
u_i : \bigtimes_{i=1}^p \Delta(\cA_i) \ni (\vec{x}_1, \dots, \vec{x}_p) \mapsto \E_{\vec{a} \sim \vec{x}}[u_i(\vec{a})] = \sum_{\vec{a} \in \cA} u_i(a_1, \dots, a_p) \prod_{j \in \range{p}} \vec{x}_j(a_j)    
\end{equation*}
to denote the mixed extension of player's $i \in \range{p}$ utility function, where $\cA \defeq \bigtimes_{j=1}^p \cA_j$. Furthermore, for $a_i \in \cA_i$ we let
\begin{equation*}
    u_i(a_i, \vec{x}_{-i}) \defeq \sum_{\vec{a}_{-i} \in \cA_{-i}} u_i(a_1, \dots, a_i, \dots, a_p) \prod_{j \neq i} \vec{x}_j(a_j),
\end{equation*}
where $\cA_{-i} \defeq \bigtimes_{j \neq i} \cA_j$.
\folklore*

\begin{proof}
By definition of regret \eqref{eq:reg}, it follows that for any player $i \in \range{p}$ and any possible deviation $a_i' \in \cA_i$,
\begin{equation}
    \label{eq:reg-conseq}
    \sum_{t=1}^T u_i(a_i', \vec{x}_{-i}^{(t)}) - \sum_{t=1}^T u_i(\vec{x}^{(t)}) \leq \reg_i^T.
\end{equation}
Moreover,
\begin{equation*}
    \E_{\vec{a} \sim \bar{\vec{\mu}}}\left[ u_i(\vec{a}) \right] = \frac{1}{T} \sum_{t=1}^T \E_{\vec{a} \sim \vec{\mu}^{(t)}} \left[ u_i(\vec{a}) \right] = \frac{1}{T} \sum_{t=1}^T u_i(\vec{x}^{(t)});
\end{equation*}
and 
\begin{equation*}
    \E_{\vec{a} \sim \bar{\vec{\mu}}}\left[ u_i(a_i', \vec{a}_{-i}) \right] = \frac{1}{T} \sum_{t=1}^T \E_{\vec{a} \sim \vec{\mu}^{(t)}} \left[ u_i(a_i', \vec{a}_{-i}) \right] = \frac{1}{T} \sum_{t=1}^T u_i(a_i', \vec{x}_{-i}^{(t)}).
\end{equation*}
Thus, the theorem follows directly from \eqref{eq:reg-conseq}.
\end{proof}

\rvuour*
\begin{proof}
First, the utility $\vec{u}^{(t)}_{\cX}$ observed by player $\cX$ at time $t \geq 0$ is equal to $\mat{A} \vec{y}^{(t)}$. Thus, the claimed bound for $\reg_{\cX}^T$ follows directly from \Cref{prop:refined_rvu} using the fact that $\| \mat{A} \vec{y}^{(t)} - \mat{A} \vec{y}^{(t-1)} \|_* \leq \|\mat{A}\|_{\op} \| \vec{y}^{(t)} - \vec{y}^{(t-1)}\|$. Similarly, the utility $\vec{u}^{(t)}_{\cY}$ observed by player $\cY$ at time $t \geq 0$ is tantamount to $\mat{B}^\top \vec{x}^{(t)}$. As a result, the bound for $\reg_{\cY}^T$ follows from \Cref{prop:refined_rvu} and the fact that $\|\mat{B}^\top \vec{x}^{(t)} - \mat{B}^\top \vec{x}^{(t-1)}\|_* \leq \| \mat{B}\|_{\op} \| \vec{x}^{(t)} - \vec{x}^{(t-1)} \|$, since $\|\mat{B}^\top\|_{\op} = \|\mat{B}\|_{\op}$.
\end{proof}

\fixedp*

\begin{proof}
First, by definition of the Bregman divergence, the update rule of \eqref{eq:OMD} can be equivalently expressed as 
\begin{equation*}
    \begin{split}
    \Vec{x}^{(t)} &\defeq \argmax_{\Vec{x} \in \cX} \left\{ \langle \Vec{x}, \mathbf{A} \vec{y}^{(t-1)} \rangle - \frac{1}{\eta} \cR(\Vec{x}) + \frac{1}{\eta} \left\langle \Vec{x}, \nabla \cR(\hatx^{(t-1)}) \right\rangle \right\}; \\
    \hatx^{(t)} &\defeq \argmax_{\hatx \in \cX} \left\{ \langle \hatx, \mat{A} \vec{y}^{(t)} \rangle - \frac{1}{\eta} \cR(\hatx) + \frac{1}{\eta} \left\langle \hatx, \nabla \cR(\hatx^{(t-1)}) \right\rangle \right\}.
    \end{split}
\end{equation*}

Now the maximization problem associated with the update rule of the secondary sequence can be equivalently cast in a variational inequality form (\emph{e.g.}, see \citep{Facchinei03:Finite}):
\begin{equation*}
    \left\langle \hatx - \hatx^{(t)}, \mat{A} \vec{y}^{(t)} - \frac{1}{\eta} \left( \nabla \cR(\hatx^{(t)}) - \nabla \cR(\hatx^{(t-1)}) \right) \right\rangle \leq 0, \quad \forall \hatx \in \cX.
\end{equation*}
Thus, 
\begin{align}
    \langle \hatx - \hatx^{(t)}, \mat{A} \vec{y}^{(t)} \rangle &\leq \frac{1}{\eta} \left\langle \hatx - \hatx^{(t)}, \nabla \cR(\hatx^{(t)}) - \nabla \cR(\hatx^{(t-1)}) \right\rangle \notag \\
    &\leq \frac{1}{\eta} \| \hatx - \hatx^{(t)} \| \|\nabla \cR(\hatx^{(t)}) - \nabla \cR(\hatx^{(t-1)})\|_* \label{align:cauchy_schwartz} \\
    &\leq \frac{G}{\eta} \| \hatx - \hatx^{(t)} \| \| \hatx^{(t)} - \hatx^{(t-1)} \| \label{align:smoothness} \\
    &\leq 2 \epsilon G \diamset{\cX}, \label{align:closeness}
\end{align}
for any $\hatx \in \cX$, where \eqref{align:cauchy_schwartz} derives from the Cauchy-Schwarz inequality; \eqref{align:smoothness} follows since $\cR_{\cX}$ is assumed to be $G$-smooth; and \eqref{align:closeness} uses that $\| \hatx^{(t)} - \hatx^{(t-1)} \| \leq 2 \epsilon \eta$, which in turn follows since $\| \hatx^{(t)} - \hatx^{(t-1)} \| \leq \| \hatx^{(t)} - \vec{x}^{(t)} \| + \| \vec{x}^{(t)} - \hatx^{(t-1)} \| \leq 2 \epsilon \eta$ (triangle inequality), as well as the fact that, by definition, $\| \hatx - \hatx^{(t)}\| \leq \diamset{\cX}$ for any $\hatx \in \cX$. As a result, we have shown that for any $\hatx \in \cX$,
\begin{equation}
    \label{eq:approx-br}
    \langle \hatx^{(t)}, \mat{A} \vec{y}^{(t)} \rangle \geq \langle \hatx, \mat{A} \vec{y}^{(t)} \rangle - 2 \epsilon G \Omega_{\cX}.
\end{equation}
Furthermore, by Cauchy-Schwarz inequality we have that 
\begin{equation}
    \label{eq:close-eps}
    \langle \vec{x}^{(t)} - \hatx^{(t)}, \mat{A} \vec{y}^{(t)} \rangle \geq - \| \vec{x}^{(t)} - \hatx^{(t)} \| \mat{A} \vec{y}^{(t)} \|_* \geq - \epsilon \eta,
\end{equation}
where we used the normalization assumption $\| \mat{A} \vec{y}^{(t)}\|_* \leq 1$. Thus, combing \eqref{eq:close-eps} with \eqref{eq:approx-br} yields that 
\begin{equation}
    \label{eq:br-x}
    \langle \vec{x}^{(t)}, \mat{A} \vec{y}^{(t)} \rangle \ge \langle \hatx^{(t)}, \mat{A} \vec{y}^{(t)} \rangle - \epsilon \eta \geq \langle \hatx, \mat{A} \vec{y}^{(t)} \rangle - 2 \epsilon G \diamset{\cX} - \epsilon \eta,
\end{equation}
for any $\hatx \in \cX$. By symmetry, we analogously get that
\begin{equation}
    \label{eq:br-y}
    \langle \vec{y}^{(t)}, \mat{B}^\top \vec{x}^{(t)} \rangle \ge \langle \haty^{(t)}, \mat{B}^\top \vec{x}^{(t)} \rangle - \epsilon \eta \geq \langle \haty, \mat{B}^\top \vec{x}^{(t)} \rangle - 2 \epsilon G \diamset{\cY} - \epsilon \eta,
\end{equation}
for any $\haty \in \cY$. Thus, recalling \Cref{def:NE}, the claim follows from \eqref{eq:br-x} and \eqref{eq:br-y}.
\end{proof}

\balanced*

\begin{proof}
By $1$-strong convexity of $\cR_{\cX}$ with respect to $\|\cdot\|$,
\begin{equation}
    \label{eq:QG-1}
    \begin{split}
        \langle \Vec{x}^{(t)}, \mat{A} \vec{y}^{(t-1)} \rangle - \frac{1}{\eta} D_{\cR_{\cX}} \infdivx{\vec{x}^{(t)}}{\hatx^{(t-1)}} - \langle \hatx^{(t-1)}, \mat{A} \vec{y}^{(t-1)} \rangle \geq \frac{1}{2\eta} \| \Vec{x}^{(t)} - \hatx^{(t-1)} \|^2,
    \end{split}
\end{equation}
where we used the definition of the update rule of the primary sequence of \eqref{eq:OMD}. Similarly,
\begin{equation}
    \label{eq:QG-2}
        \langle \hatx^{(t)}, \mat{A} \vec{y}^{(t)} \rangle - \frac{1}{\eta} D_{\cR_{\cX}} \infdivx{\hatx^{(t)}}{\hatx^{(t-1)}} - \langle \vec{x}^{(t)}, \mat{A} \vec{y}^{(t)} \rangle + \frac{1}{\eta} D_{\cR_{\cX}} \infdivx{\vec{x}^{(t)}}{\hatx^{(t-1)}} \geq \frac{1}{2\eta} \| \hatx^{(t)} - \vec{x}^{(t)} \|^2.
\end{equation}
Hence, summing \eqref{eq:QG-1} and \eqref{eq:QG-2} yields that
\begin{equation*}
    \langle \vec{x}^{(t)}, \mat{A} (\vec{y}^{(t-1)} - \vec{y}^{(t)}) \rangle \geq \frac{1}{2\eta} \left( \| \Vec{x}^{(t)} - \hatx^{(t-1)} \|^2 + \| \hatx^{(t)} - \vec{x}^{(t)} \|^2 \right) - \langle \hatx^{(t)}, \mat{A} \vec{y}^{(t)} \rangle + \langle \hatx^{(t-1)}, \mat{A} \vec{y}^{(t-1)} \rangle,
\end{equation*}
where we used that $D_{\cR_{\cX}} \infdivx{\hatx^{(t)}}{\hatx^{(t-1)}} \geq 0$. Thus, a telescopic summation over all $t \in \range{T}$ implies that
\begin{equation}
    \label{eq:QG}
    \sum_{t=1}^T \langle \vec{x}^{(t)}, \mat{A} (\vec{y}^{(t-1)} - \vec{y}^{(t)}) \rangle \ge \frac{1}{2\eta} \sum_{t=1}^T \left( \| \Vec{x}^{(t)} - \hatx^{(t-1)} \|^2 + \| \hatx^{(t)} - \vec{x}^{(t)} \|^2 \right) - 2 \normset{\cX},
\end{equation}
since $- \langle \hatx^{(T)}, \mat{A} \vec{y}^{(T)} \rangle \ge - \| \hatx^{(T)} \| \|\mat{A} \vec{y}^{(T)} \|_* \geq - \normset{\cX}$ and $ \langle \hatx^{(0)}, \mat{A} \vec{y}^{(0)} \rangle \ge - \| \hatx^{(0)} \| \|\mat{A} \vec{y}^{(0)} \|_* \geq - \normset{\cX}$, where we used the normalization assumption. Furthermore,
\begin{equation*}
    \langle \vec{x}^{(t)}, \mat{A}  ( \vec{y}^{(t-1)} - \vec{y}^{(t)} ) \rangle \leq \| \vec{x}^{(t)} \| \| \mat{A} ( \vec{y}^{(t)} - \vec{y}^{(t-1)} ) \|_* \leq \normset{\cX} \| \mat{A} \|_{\op} \| \vec{y}^{(t)} - \vec{y}^{(t-1)} \|. 
\end{equation*}
Thus, combining this inequality with \eqref{eq:QG} implies that
\begin{equation*}
    \sum_{t=1}^T \| \vec{y}^{(t)} - \vec{y}^{(t-1)} \| \geq \frac{1}{2\eta \normset{\cX} \|\mat{A}\|_{\op}} \sum_{t=1}^T \left( \| \vec{x}^{(t)} - \hatx^{(t-1)} \|^2 + \| \hatx^{(t)} - \vec{x}^{(t)} \|^2 \right) - \frac{2}{\|\mat{A}\|_{\op}}.
\end{equation*}
This completes the first part of the claim. The second part follows analogously by symmetry.
\end{proof}

\stability*

\begin{proof}
Fix any $t \in \N$. By definition of the primary sequence of \eqref{eq:OMD},
\begin{equation*}
    \langle \vec{x}^{(t)}, \mat{A} \vec{y}^{(t-1)} \rangle - \frac{1}{\eta} D_{\cR} \infdivx{\vec{x}^{(t)}}{\hatx^{(t-1)}} - \langle \hatx^{(t-1)}, \mat{A} \vec{y}^{(t-1)} \rangle \geq \frac{1}{2\eta} \| \vec{x}^{(t)} - \hatx^{(t-1)} \|^2,
\end{equation*}
where we used the $1$-strong convexity of the regularizer $\cR_{\cX}$ with respect to $\|\cdot\|$. In turn, this implies that 
\begin{equation*}
    \langle \vec{x}^{(t)} - \hatx^{(t-1)}, \mat{A} \vec{y}^{(t-1)} \rangle \geq \frac{1}{\eta} \| \vec{x}^{(t)} - \hatx^{(t-1)} \|^2,
\end{equation*}
since $D_{\cR_{\cX}} \infdivx{\vec{x}^{(t)}}{\hatx^{(t-1)}} \geq \frac{1}{2} \| \vec{x}^{(t)} - \hatx^{(t-1)} \|^2$ (by $1$-strong convexity of $\cR_{\cX}$). Thus, an application of Cauchy-Schwarz inequality yields that 
\begin{equation}
    \label{eq:stab-x}
    \| \vec{x}^{(t)} - \hatx^{(t-1)} \|^2 \leq \eta \| \vec{x}^{(t)} - \hatx^{(t-1)} \| \|\mat{A} \vec{y}^{(t-1)} \|_* \implies \| \vec{x}^{(t)} - \hatx^{(t-1)} \| \leq \eta,
\end{equation}
since $\|\mat{A} \vec{y}^{(t-1)} \|_* \leq 1$ by the normalization assumption. Similar reasoning applied for the secondary sequence of \eqref{eq:OMD} implies that for any $t \in \N$,
\begin{equation}
    \label{eq:stab-x-}
    \| \hatx^{(t)} - \hatx^{(t-1)} \| \leq \eta.
\end{equation}

Now if $t = 1$, it follows from \eqref{eq:stab-x} that $\|\vec{x}^{(t)} - \vec{x}^{(t-1)}\| = \|\vec{x}^{(t)} - \hatx^{(t-1)}\| \leq \eta$ since $\vec{x}^{(0)} = \hatx^{(0)}$. Otherwise, for $t \geq 2$, applying the triangle inequality yields that $\|\vec{x}^{(t)} - \vec{x}^{(t-1)} \| \leq \| \vec{x}^{(t)} - \hatx^{(t-1)} \| + \|\vec{x}^{(t-1)} - \hatx^{(t-2)} \| + \| \hatx^{(t-1)} - \hatx^{(t-2)} \| \leq 3 \eta$ by \eqref{eq:stab-x} and \eqref{eq:stab-x-}. This completes the first part of the proof. Analogously, we conclude that $\| \vec{y}^{(t)} - \vec{y}^{(t-1)} \| \leq 3 \eta$ for any $t \in \N$.
\end{proof}

\begin{theorem}[Linear Decay of Regret; Full Version of \Cref{theorem:neg-reg}]
    \label{theorem:neg-reg-detailed}
    Suppose that both players in a bimatrix game $(\mat{A}, \mat{B})$ employ \eqref{eq:OMD} with $G$-smooth regularizer, learning rate $\eta > 0$ such that
    \begin{equation*}
    \eta \leq \min \left\{ \frac{1}{4\max\{\|\mat{A}\|_{\op}, \|\mat{B}\|_{\op} \}}, \frac{\epsilon^2}{96 \|\mat{A}\|_{\op} \|\mat{B}\|_{\op} \max \{\normset{\cX}, \normset{\cY} \} } \right\}
\end{equation*}
and 
\begin{equation*}
    T \geq \max \left\{ \frac{16 \max\{\normset{\cX}, \normset{\cY} \}}{\epsilon^2 \eta}, \frac{32 \max\{ \diamreg{\cR_{\cX}}, \diamreg{\cR_{\cY}}\}}{\epsilon^2\eta^2}, \frac{2048 \max\{ \diamreg{\cR_{\cY}} \normset{\cX}^2 \|\mat{A}\|^2_{\op}, \diamreg{\cR_{\cX}} \normset{\cY}^2 \|\mat{B}\|^2_{\op}\}}{\epsilon^4 \eta^2} \right\},
\end{equation*}
for some fixed $\epsilon > 0$. Then, if the dynamics do not reach a $(2\epsilon G \max\{\diamset{\cX}, \diamset{\cY}\} + \epsilon \eta)$-approximate NE, then 
    \begin{equation*}
        \max \{ \reg_{\cX}^T, \reg_{\cY}^T \} \leq - \min\left\{ \frac{\epsilon^2 \eta}{32}, \frac{\epsilon^4 \eta}{2048 \max\{ \normset{\cX}^2 \|\mat{A}\|^2_{\op}, \normset{\cY}^2 \|\mat{B}\|^2_{\op} \}} \right\} T.
    \end{equation*}
\end{theorem}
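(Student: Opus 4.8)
The plan is to combine the contrapositive of \Cref{proposition:approx_stat} with the negative term of the $\rvu$ bound from \Cref{cor:rvu-our}. If the dynamics never reach a $(2\epsilon G \max\{\diamset{\cX}, \diamset{\cY}\} + \epsilon\eta)$-approximate NE, then for every $t \in \range{T}$ at least one of $\| \vec{x}^{(t)} - \hatx^{(t-1)} \|$, $\| \hatx^{(t)} - \vec{x}^{(t)} \|$, $\| \vec{y}^{(t)} - \haty^{(t-1)} \|$, $\| \haty^{(t)} - \vec{y}^{(t)} \|$ must exceed $\epsilon\eta$, since otherwise \Cref{proposition:approx_stat} would certify an approximate NE at time $t$. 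Squaring and summing over $t$ gives $\Sigma_{\cX}^T + \Sigma_{\cY}^T \ge \epsilon^2 \eta^2 T$, in the notation of \eqref{eq:not}. By symmetry of the two players (swapping $(\cX, \mat{A})$ with $(\cY, \mat{B})$), I may assume $\Sigma_{\cX}^T \ge \Sigma_{\cY}^T$, so that $\Sigma_{\cX}^T \ge \tfrac12 \epsilon^2 \eta^2 T$.

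Bounding $\reg_{\cX}^T$ is the routine half. In \Cref{cor:rvu-our} the positive variation term for $\cX$ is $\eta \|\mat{A}\|_{\op}^2 \sum_{t} \| \vec{y}^{(t)} - \vec{y}^{(t-1)} \|^2$; a triangle inequality bounds this first-order movement of $\cY$ by a constant multiple of $\Sigma_{\cY}^T \le \Sigma_{\cX}^T$, so that under $\eta \le \tfrac{1}{4\|\mat{A}\|_{\op}}$ it is absorbed into half of the negative $\tfrac{1}{4\eta} \Sigma_{\cX}^T$ term. What survives is $\Omega(\epsilon^2 \eta T)$ by the lower bound on $\Sigma_{\cX}^T$, while the constant $\tfrac{\diamreg{\cR_{\cX}}}{\eta}$ is dominated once $T \ge \tfrac{32 \diamreg{\cR_{\cX}}}{\epsilon^2 \eta^2}$; together these yield $\reg_{\cX}^T \le -\tfrac{\epsilon^2 \eta}{32} T$. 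Note that this half never sees $\epsilon$ in the learning rate, which is exactly why the hypothesis on $\eta$ has an $\epsilon$-free branch.

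The work is in $\reg_{\cY}^T$, and I expect two coupled obstacles here. First, I must show that $\cY$ moves enough, i.e.\ that $\Sigma_{\cY}^T$ is also large; this is the role of \Cref{lemma:balanced}, which formalizes that a fast-moving player drags the other along. I would feed the lower bound on $\Sigma_{\cX}^T$ into its first inequality (using $T \ge \tfrac{16 \normset{\cX}}{\epsilon^2 \eta}$ to absorb the additive $-\tfrac{2}{\|\mat{A}\|_{\op}}$ slack) to lower bound the first-order path length $\sum_t \| \vec{y}^{(t)} - \vec{y}^{(t-1)} \|$, and then---the lossy, $\epsilon^4$-inducing step---convert it back to the second-order quantity through Cauchy--Schwarz, $\sum_t \| \vec{y}^{(t)} - \vec{y}^{(t-1)} \| \le 2\sqrt{T\,\Sigma_{\cY}^T}$, which is productive only because $T$ is large. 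This delivers $\Sigma_{\cY}^T \ge \Omega\bigl(\tfrac{\epsilon^4 \eta^2 T}{\normset{\cX}^2 \|\mat{A}\|_{\op}^2}\bigr)$.

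The second, more structural obstacle is that the positive variation term in $\reg_{\cY}^T$ is $\eta \|\mat{B}\|_{\op}^2 \sum_t \| \vec{x}^{(t)} - \vec{x}^{(t-1)} \|^2$, which tracks the \emph{large} path length $\Sigma_{\cX}^T$ and hence cannot be charged against $\cY$'s own smaller negative term. I would therefore bound it crudely via \Cref{lemma:stability} ($\| \vec{x}^{(t)} - \vec{x}^{(t-1)} \| \le 3\eta$, so the sum is $\le 9\eta^2 T$); it is precisely this crude step that forces the second, $\epsilon^2$-scaled branch $\eta = O(\epsilon^2)$ of the learning-rate hypothesis, ensuring $\eta^3 \|\mat{B}\|_{\op}^2 T$ is dominated by the negative $\Omega(\epsilon^4 \eta T)$ term. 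With the constant $\tfrac{\diamreg{\cR_{\cY}}}{\eta}$ absorbed once $T \ge \tfrac{2048 \diamreg{\cR_{\cY}} \normset{\cX}^2 \|\mat{A}\|_{\op}^2}{\epsilon^4 \eta^2}$, I obtain $\reg_{\cY}^T \le -\Omega\bigl(\tfrac{\epsilon^4 \eta T}{\normset{\cX}^2 \|\mat{A}\|_{\op}^2}\bigr)$. Since both regrets now fall below $-\min\{\cdots\}T$ (the $\max$ over $\normset{\cX}^2\|\mat{A}\|_{\op}^2$ and $\normset{\cY}^2\|\mat{B}\|_{\op}^2$ in the statement accounts for the symmetric case), the claimed bound on $\max\{\reg_{\cX}^T, \reg_{\cY}^T\}$ follows. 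The exact constants ($32$, $96$, $2048$) are then a matter of careful bookkeeping, but the conceptual crux is the asymmetric handling of the two variation terms.
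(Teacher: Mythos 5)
Your proposal is correct and follows essentially the same route as the paper's proof: the contrapositive of \Cref{proposition:approx_stat} to get $\Sigma_{\cX}^T + \Sigma_{\cY}^T \ge \epsilon^2\eta^2 T$, the case split on which player's path length dominates, absorption of the dominant player's variation term into half its negative $\rvu$ term under $\eta \le \tfrac{1}{4\|\mat{A}\|_{\op}}$, and then \Cref{lemma:balanced} followed by Cauchy--Schwarz and the crude \Cref{lemma:stability} bound (which is indeed what forces the $\eta = O(\epsilon^2)$ branch) for the slower player. The only differences are cosmetic: you invoke symmetry in place of writing out the second case, and you lower-bound $\Sigma_{\cY}^T$ directly rather than $\sum_{t=1}^T \| \vec{y}^{(t)} - \vec{y}^{(t-1)} \|^2$ as the paper does, which changes nothing but constants.
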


\begin{proof}
Suppose that there exists $t \in \range{T}$ such that 
\begin{equation*}
    \left( \| \vec{x}^{(t)} - \hatx^{(t)} \|^2 + \| \vec{x}^{(t)} - \hatx^{(t-1)}  \|^2 \right) + \left( \| \vec{y}^{(t)} - \haty^{(t)} \|^2 + \| \vec{y}^{(t)} - \haty^{(t-1)}  \|^2 \right) \leq \epsilon^2 \eta^2.
\end{equation*}

This would imply that $\|\vec{x}^{(t)} - \hatx^{(t)}\|, \| \vec{x}^{(t)} - \hatx^{(t-1)} \| \leq \epsilon \eta$ and $\|\vec{y}^{(t)} - \haty^{(t)}\|, \| \vec{y}^{(t)} - \haty^{(t-1)} \| \leq \epsilon \eta$. In turn, by \Cref{proposition:approx_stat} it follows that the pair of strategies $(\vec{x}^{(t)}, \vec{y}^{(t)})$ is a $(2 \epsilon G \max\{ \diamset{\cX}, \diamset{\cY} \} + \epsilon \eta)$-approximate Nash equilibrium, contradicting our assumption. As a result, we conclude that for all $t \in \range{T}$,
\begin{equation*}
    \left( \| \vec{x}^{(t)} - \hatx^{(t)} \|^2 + \| \vec{x}^{(t)} - \hatx^{(t-1)}  \|^2 \right) + \left( \| \vec{y}^{(t)} - \haty^{(t)} \|^2 + \| \vec{y}^{(t)} - \haty^{(t-1)}  \|^2 \right) \ge \epsilon^2 \eta^2.
\end{equation*}
Summing over all $t \in \range{T}$ yields that 
\begin{equation}
    \label{eq:linear_growth}
    \sum_{t=1}^T \left( \| \vec{x}^{(t)} - \hatx^{(t)} \|^2 + \| \vec{x}^{(t)} - \hatx^{(t-1)}  \|^2 \right) + \sum_{t=1}^T \left( \| \vec{y}^{(t)} - \haty^{(t)} \|^2 + \| \vec{y}^{(t)} - \haty^{(t-1)}  \|^2 \right) \geq \epsilon^2 \eta^2 T.
\end{equation}
We distinguish between two cases. First, we treat the case where 
\begin{equation}
    \label{eq:x-dom}
    \sum_{t=1}^T \left( \| \vec{x}^{(t)} - \hatx^{(t)} \|^2 + \| \vec{x}^{(t)} - \hatx^{(t-1)}  \|^2 \right) \ge \sum_{t=1}^T \left( \| \vec{y}^{(t)} - \haty^{(t)} \|^2 + \| \vec{y}^{(t)} - \haty^{(t-1)}  \|^2 \right).
\end{equation}
Then, by virtue of \eqref{eq:linear_growth},
\begin{equation}
    \label{eq:linear_growth-x}
    \sum_{t=1}^T \left( \| \vec{x}^{(t)} - \hatx^{(t)} \|^2 + \| \vec{x}^{(t)} - \hatx^{(t-1)}  \|^2 \right) \geq \frac{\epsilon^2 \eta^2}{2} T.
\end{equation}
Further, by the triangle inequality and Young's inequality,
\begin{equation*}
    \| \vec{y}^{(t)} - \vec{y}^{(t-1)} \|^2 \leq 2 \| \vec{y}^{(t)} - \haty^{(t-1)} \|^2 + 2 \| \haty^{(t-1)} - \vec{y}^{(t-1)} \|^2,
\end{equation*}
and summing over all $t \in \range{T}$ yields that
\begin{align*}
    \sum_{t=1}^T \| \vec{y}^{(t)} - \vec{y}^{(t-1)} \|^2 &\leq 2 \sum_{t=1}^T \| \vec{y}^{(t)} - \haty^{(t-1)} \|^2 + 2 \sum_{t=1}^T \| \haty^{(t-1)} - \vec{y}^{(t-1)} \|^2 \\
    &\leq 2 \sum_{t=1}^T \| \vec{y}^{(t)} - \haty^{(t-1)} \|^2 + 2 \sum_{t=1}^T \| \haty^{(t)} - \vec{y}^{(t)} \|^2,
\end{align*}
where the last inequality follows since $\haty^{(0)} = \vec{y}^{(0)}$. Hence, combining the latter bound with \eqref{eq:x-dom} implies that
\begin{equation}
    \label{eq:dom-x-new}
    \sum_{t=1}^T \left( \| \vec{x}^{(t)} - \hatx^{(t)} \|^2 + \| \vec{x}^{(t)} - \hatx^{(t-1)}  \|^2 \right) \geq \frac{1}{2} \sum_{t=1}^T \| \vec{y}^{(t)} - \haty^{(t-1)} \|^2.
\end{equation}
Now we are ready to bound the regret of player $\cX$. By \Cref{cor:rvu-our},
\begin{equation}
    \reg_{\cX}^T \leq \frac{\diamreg{\cR_{\cX}}}{\eta} + \eta \|\mat{A}\|^2_{\op} \sum_{t=1}^T \| \vec{y}^{(t)} - \vec{y}^{(t-1)} \|^2 - \frac{1}{4\eta} \sum_{t=1}^T \left( \| \vec{x}^{(t)} - \hatx^{(t)} \|^2 + \| \vec{x}^{(t)} - \hatx^{(t-1)} \|^2 \right).
\end{equation}
But \eqref{eq:dom-x-new} implies that 
\begin{align*}
    \eta \| \mat{A} \|^2_{\op} \sum_{t=1}^T \| \vec{y}^{(t)} - \vec{y}^{(t-1)} \|^2 - \frac{1}{8\eta} \sum_{t=1}^T &\left( \| \vec{x}^{(t)} - \hatx^{(t)} \|^2 + \| \vec{x}^{(t)} - \hatx^{(t-1)} \|^2 \right) \\ &\leq \left( \eta \| \mat{A} \|^2_{\op} - \frac{1}{16\eta} \right) \sum_{t=1}^T \| \vec{y}^{(t)} - \vec{y}^{(t-1)} \|^2 \le 0,
\end{align*}
since $\eta \le \frac{1}{4 \|\mat{A}\|_{\op}}$. From this we conclude that
\begin{equation}
    \reg_{\cX}^T \leq \frac{\diamreg{\cR_{\cX}}}{\eta} - \frac{1}{8\eta} \sum_{t=1}^T \left( \| \vec{x}^{(t)} - \hatx^{(t)} \|^2 + \| \vec{x}^{(t)} - \hatx^{(t-1)}  \|^2 \right) \leq \frac{\diamreg{\cR_{\cX}}}{\eta} - \frac{\epsilon^2 \eta}{16} T \leq - \frac{\epsilon^2 \eta}{32} T,
\end{equation}
for $T \geq \frac{32 \Omega_{\cR_{\cX}}}{\epsilon^2 \eta^2}$, where we used \eqref{eq:linear_growth-x}. Next, we focus on the regret of player $\cY$. By \eqref{eq:linear_growth-x} and \Cref{lemma:balanced}, 
\begin{equation}
    \sum_{t=1}^T \| \vec{y}^{(t)} - \vec{y}^{(t-1)} \| \geq \frac{\epsilon^2 \eta}{4 \normset{\cX} \| \mat{A} \|_{\op}} T - \frac{2}{\|\mat{A}\|_{\op}} \geq \frac{\epsilon^2 \eta}{8 \normset{\cX} \| \mat{A} \|_{\op}} T,
\end{equation}
since $T \geq \frac{16 \normset{\cX}}{\epsilon^2 \eta}$. Further, by Cauchy-Schwarz inequality, 
\begin{equation}
    \label{eq:linear_growth-y}
    \sum_{t=1}^T \| \vec{y}^{(t)} - \vec{y}^{(t-1)} \|^2 \geq \frac{1}{T} \left( \sum_{t=1}^T \| \vec{y}^{(t)} - \vec{y}^{(t-1)} \| \right)^2 \geq \frac{\epsilon^4 \eta^2}{64 \normset{\cX}^2 \|\mat{A}\|^2_{\op}} T.
\end{equation}
Now from \Cref{cor:rvu-our}, the regret of player $\cY$ can be bounded as
\begin{align*}
    \reg_{\cY}^T &\leq \frac{\diamreg{\cR_{\cY}}}{\eta} + \eta \|\mat{B}\|^2_{\op} \sum_{t=1}^T \| \vec{x}^{(t)} - \vec{x}^{(t-1)} \|^2 - \frac{1}{4\eta} \sum_{t=1}^T \left( \| \vec{y}^{(t)} - \haty^{(t)} \|^2 + \| \vec{y}^{(t)} - \haty^{(t-1)} \|^2 \right) \\
    &\leq \frac{\diamreg{\cR_{\cY}}}{\eta} + \eta \|\mat{B}\|^2_{\op} \sum_{t=1}^T \| \vec{x}^{(t)} - \vec{x}^{(t-1)} \|^2 - \frac{1}{8\eta} \sum_{t=1}^T \| \vec{y}^{(t)} - \vec{y}^{(t-1)} \|^2,
\end{align*}
where we used that 
\begin{align*}
    \sum_{t=1}^T \| \vec{y}^{(t)} - \vec{y}^{(t-1)} \|^2 \leq 2 \sum_{t=1}^T \| \vec{y}^{(t)} - \haty^{(t-1)} \|^2 + 2 \sum_{t=1}^T \| \haty^{(t)} - \vec{y}^{(t)} \|^2.
\end{align*}
Furthermore, by \Cref{lemma:stability} and \eqref{eq:linear_growth-y},
\begin{equation*}
    \eta \| \mat{B}\|^2_{\op} \sum_{t=1}^T \| \vec{x}^{(t)} - \vec{x}^{(t-1)} \|^2 \leq 9 \eta^3 \|\mat{B}\|^2_{\op} T \leq \frac{\epsilon^4 \eta}{1024 \normset{\cX}^2 \|\mat{A}\|^2_{\op}} T \le \frac{1}{16\eta} \sum_{t=1}^T \| \vec{y}^{(t)} - \vec{y}^{(t-1)} \|^2,
\end{equation*}
for $\eta \leq \epsilon^2 (96 \normset{\cX} \|\mat{A}\|_{\op} \|\mat{B}\|_{\op})^{-1}$. As a result, \eqref{eq:linear_growth-y} implies that
\begin{equation*}
    \reg_{\cY}^T \leq \frac{\diamreg{\cR_{\cY}}}{\eta} - \frac{1}{16\eta} \sum_{t=1}^T \|\vec{y}^{(t)} - \vec{y}^{(t-1)} \|^2 \leq \frac{\diamreg{\cR_{\cY}}}{\eta} - \frac{\epsilon^4 \eta}{1024 \normset{\cX}^2 \|\mat{A}\|^2_{\op}} T \leq - \frac{\epsilon^4 \eta}{2048 \normset{\cX}^2 \|\mat{A}\|^2_{\op}} T,
\end{equation*}
for $T \geq \frac{2048 \Omega_{\cR_{\cY}} \normset{\cX}^2 \|\mat{A}\|^2_{\op} }{\epsilon^4 \eta^2}$. Similarly, let us treat the case where
\begin{equation*}
    \sum_{t=1}^T \left( \| \vec{y}^{(t)} - \haty^{(t)} \|^2 + \| \vec{y}^{(t)} - \haty^{(t-1)}  \|^2 \right) \ge \sum_{t=1}^T \left( \| \vec{x}^{(t)} - \hatx^{(t)} \|^2 + \| \vec{x}^{(t)} - \hatx^{(t-1)}  \|^2 \right).
\end{equation*}
Then, for $\eta \leq \frac{1}{4 \|\mat{B}\|_{\op}}$ and $T \geq \frac{32 \diamreg{\cR_{\cY}} }{\epsilon^2 \eta^2}$,
\begin{equation*}
    \reg_{\cY}^T \leq \frac{\diamreg{\cR_{\cY}}}{\eta} - \frac{1}{8\eta} \sum_{t=1}^T \left( \| \vec{y}^{(t)} - \haty^{(t)} \|^2 + \| \vec{y}^{(t)} - \haty^{(t-1)} \|^2 \right) \leq \frac{\diamreg{\cR_{\cY}}}{\eta} - \frac{\epsilon^2 \eta}{16} T \leq - \frac{\epsilon^2 \eta}{32} T.
\end{equation*}
Moreover, for $T \ge \frac{16 \normset{\cY}}{\epsilon^2 \eta}$, 
\begin{equation*}
    \sum_{t=1}^T \| \vec{x}^{(t)} - \vec{x}^{(t-1)} \|^2 \ge \frac{\epsilon^4 \eta^2}{64 \normset{\cY}^2 \|\mat{B}\|_{\op}^2} T.
\end{equation*}
Thus, for $\eta \leq \epsilon^2 ( 96 \normset{\cY} \|\mat{A}\|_{\op} \|\mat{B}\|_{\op})^{-1} $,
\begin{equation*}
    \eta \|\mat{A}\|_{\op}^2 \sum_{t=1}^T \| \vec{y}^{(t)} - \vec{y}^{(t-1)} \|^2 \le 9 \eta^3 \|\mat{A}\|_{\op}^2 T \leq \frac{\epsilon^4 \eta}{1024 \normset{\cY}^2 \|\mat{B}\|_{\op}^2} T \leq \frac{1}{16\eta} \sum_{t=1}^T \| \vec{x}^{(t)} - \vec{x}^{(t-1)} \|^2.
\end{equation*}
Finally, for $T \geq \frac{2048 \diamreg{\cR_{\cX}} \normset{\cY}^2 \|\mat{B}\|^2_{\op}}{\epsilon^4 \eta^2}$,
\begin{equation*}
    \reg_{\cX}^T \leq \frac{\diamreg{\cR_{\cX}}}{\eta} - \frac{1}{16\eta} \sum_{t=1}^T \| \vec{x}^{(t)} - \vec{x}^{(t-1)} \|^2 \leq \frac{\diamreg{\cR_{\cX}}}{\eta} - \frac{\epsilon^4 \eta}{1024 \normset{\cY}^2 \|\mat{B}\|_{\op}^2} T \leq - \frac{\epsilon^4 \eta}{2048 \normset{\cY}^2 \|\mat{B}\|_{\op}^2} T.
\end{equation*}
\end{proof}

Next, we state the implication of \Cref{theorem:neg-reg-detailed} in normal-form games under \eqref{eq:OGD}. In that setting, it holds that $\normset{\cX}, \normset{\cY} = 1$; $\diamset{\cX}, \diamset{\cY} \leq \sqrt{2}$; $\diamreg{\cR_{\cX}}, \diamreg{\cR_{\cY}} \le 1$; and $G = 1$. Thus, we obtain the following simplified statement. 

\begin{corollary}[OGD in Normal-Form Games]
    \label{corollary:NFGs}
Suppose that both players in a bimatrix game $(\mat{A}, \mat{B})$ employ \eqref{eq:OGD} with learning rate $\eta > 0$ such that 
\begin{equation*}
    \eta \leq \min \left\{ \frac{1}{4\max\{\|\mat{A}\|_{\op}, \|\mat{B}\|_{\op} \}}, \frac{\epsilon^2}{96 \|\mat{A}\|_{\op} \|\mat{B}\|_{\op}} \right\}
\end{equation*}
and
\begin{equation*}
    T \geq \max \left\{ \frac{16}{\epsilon^2 \eta}, \frac{32}{\epsilon^2\eta^2}, \frac{2048 \max\{ \|\mat{A}\|^2_{\op}, \|\mat{B}\|^2_{\op}\}}{\epsilon^4 \eta^2} \right\},
\end{equation*}
for some fixed $\epsilon > 0$. Then,
\begin{itemize}
    \item Either there exists $t \in \range{T}$ such that the pair of strategies $(\vec{x}^{(t)}, \vec{y}^{(t)}) \in \cX \times \cY$ constitutes an $\epsilon(3 + \eta)$-approximate Nash equilibrium;
    \item Or, otherwise, the average correlated distribution of play after $T$ repetitions of the game is a 
    \begin{equation*}
        \min\left\{ \frac{\epsilon^2 \eta}{32}, \frac{\epsilon^4 \eta}{2048 \max\{ \|\mat{A}\|^2_{\op}, \|\mat{B}\|^2_{\op} \}} \right\}-\text{strong coarse correlated equilibrium.}
    \end{equation*}
\end{itemize}
\end{corollary}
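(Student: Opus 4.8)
The plan is to derive \Cref{corollary:NFGs} as a direct specialization of \Cref{theorem:neg-reg-detailed} to the normal-form setting with Euclidean regularization, combined with the folklore connection of \Cref{theorem:folklore} to convert negative regret into a strong CCE. The only genuinely new computation is pinning down the geometric constants entering \Cref{theorem:neg-reg-detailed} when $\cX = \Delta(\cA_{\cX})$, $\cY = \Delta(\cA_{\cY})$ and both players run \eqref{eq:OGD}, i.e.\ the regularizer $\cR(\vec{x}) = \frac12\|\vec{x}\|_2^2$ with the self-dual norm $\|\cdot\|_2$. First I would record the four simplex constants: $\normset{\cX} = \normset{\cY} = 1$ (the maximal $\ell_2$ norm over a simplex is attained at a vertex); $\diamset{\cX}, \diamset{\cY} \le \sqrt{2}$ (the $\ell_2$ distance between two vertices); $G = 1$ (the Euclidean regularizer is $1$-smooth); and, since $\hatx^{(0)} = \argmin_{\hatx \in \cX}\cR(\hatx)$ is the centroid, $\diamreg{\cR_{\cX}} = \max_{\vec{x} \in \cX}\frac12\|\vec{x} - \hatx^{(0)}\|_2^2 \le 1$, and likewise $\diamreg{\cR_{\cY}} \le 1$.

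With these in hand I would substitute into the hypotheses and conclusion of \Cref{theorem:neg-reg-detailed}. The learning-rate condition and the three lower bounds on $T$ collapse---upon replacing $\normset{\cX}, \normset{\cY}$ and the two $\cR$-diameters by their value/upper bound $1$---to exactly the conditions stated in the corollary. For the first alternative, the approximation parameter $2\epsilon G\max\{\diamset{\cX}, \diamset{\cY}\} + \epsilon\eta$ of the theorem is at most $2\sqrt{2}\,\epsilon + \epsilon\eta \le \epsilon(3+\eta)$; since any $\theta$-approximate NE is a fortiori an $\epsilon(3+\eta)$-approximate NE whenever $\theta \le \epsilon(3+\eta)$, the existence of some $t$ achieving the sharper guarantee yields the first bullet. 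For the second alternative, the theorem's regret bound simplifies (using $\normset{\cX} = \normset{\cY} = 1$) to $\max\{\reg_{\cX}^T, \reg_{\cY}^T\} \le -\delta T$ with $\delta = \min\{\frac{\epsilon^2\eta}{32}, \frac{\epsilon^4\eta}{2048\max\{\|\mat{A}\|_\op^2, \|\mat{B}\|_\op^2\}}\}$.

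Finally, I would invoke \Cref{theorem:folklore} to turn this into the strong-CCE claim. Since $\max_{i}\reg_i^T = \max\{\reg_{\cX}^T, \reg_{\cY}^T\} \le -\delta T$, we get $\frac1T\max_i\reg_i^T \le -\delta$; plugging this into the conclusion of \Cref{theorem:folklore} gives, for every player $i$ and every deviation $a_i'$, $\E_{\vec{a}\sim\bar{\vec{\mu}}}[u_i(a_i', \vec{a}_{-i})] \le \E_{\vec{a}\sim\bar{\vec{\mu}}}[u_i(\vec{a})] - \delta$, which is exactly the defining inequality of a $\delta$-strong CCE (\Cref{def:strong-CCE}). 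To close the dichotomy I would observe that the negation of the first bullet (no iterate is an $\epsilon(3+\eta)$-approximate NE) implies, by the same monotonicity of approximate-NE parameters, that no iterate is a $\theta$-approximate NE, so the hypothesis of the regret branch of \Cref{theorem:neg-reg-detailed} is met and the second bullet follows. I do not anticipate a substantive obstacle: the argument is a specialization plus the sign observation that a uniformly negative regret bound upgrades the folklore CCE guarantee to a strong CCE. The only points demanding mild care are the exact matching of the $\eta$ and $T$ thresholds and the $2\sqrt2 < 3$ absorption, which holds for all admissible $\epsilon$.
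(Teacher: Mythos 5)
Your proposal is correct and follows essentially the same route as the paper, which obtains the corollary by substituting the simplex/Euclidean constants $\normset{\cX} = \normset{\cY} = 1$, $\diamset{\cX}, \diamset{\cY} \le \sqrt{2}$, $\diamreg{\cR_{\cX}}, \diamreg{\cR_{\cY}} \le 1$, $G = 1$ into \Cref{theorem:neg-reg-detailed} and converting the linear negative-regret bound into a strong CCE via \Cref{theorem:folklore}. Your additional care about the monotonicity of approximate-NE parameters (absorbing $2\sqrt{2} < 3$) and the direction of the implications in the $\eta$ and $T$ thresholds is exactly the bookkeeping the paper leaves implicit.
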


Finally, we state an extension of \Cref{theorem:main-abridged} that establishes a dichotomy based on whether \emph{most} of the iterates are approximate Nash equilibria---not just a \emph{single} iterate. The proof is almost identical to the argument of~\Cref{theorem:neg-reg-detailed}, and is therefore omitted.

\begin{corollary}
    \label{cor:mostiter}
    Suppose that both players in a bimatrix game employ \eqref{eq:OGD} with learning rate $\eta = O(\epsilon^2 \delta)$ and $T = \Omega\left( \frac{1}{\eta^2 \epsilon^4 \delta^2} \right)$ repetitions, for a sufficiently small $\epsilon > 0$ and $\delta \in (0,1)$. Then,
    \begin{itemize}
        \item Either a $1 - \delta$ fraction of the iterates is an $\epsilon$-approximate Nash equilibrium;
        \item Or, otherwise, the average correlated distribution of play is an $\Omega(\epsilon^4 \eta \delta^2)$-strong CCE.
    \end{itemize}
\end{corollary}
\section{Description of the Game Instances}
\label{appendix:description}

In this section we provide a detailed description of the game instances we used in our experiments in \Cref{subsection:benchmark}. 

\paragraph{Liar's Dice} The first game we experimented on is \emph{Liar's dice}, a popular benchmark introduced by~\citet{Lisy15:Online}. In our instantiation, each of the two players initially privately roles a \emph{single} unbiased $4$-face die. Then, the first player announces any face value up to $4$, as well as the minimum number of dice the player believes have that value (among the dice of both players). Subsequently, each player in its own turn can either make a higher bid, or challenge the claim made by the previous player by declaring that player a ``liar''. In particular, a bid is higher than the previous one if either the face value is higher, or if the claimed number of dices is greater. In case the current player challenges the previous bid, all dice have to be revealed. If the claim was valid, the last bidder wins and receives a reward of $+1$, while the challenger incurs a negative payoff of $-1$. Otherwise, the utilities obtained are reversed. 

\paragraph{Sheriff} Our second benchmark is a bargaining game inspired by the board game \emph{Sheriff of Nottingham}, introduced by~\citet{Farina19:Correlation}. This game consists of two players: the \emph{smuggler} and the \emph{sheriff}. In our instantiation, the smuggler initially selects a number $n \in \{0, 1, 2, 3, 4, 5\}$ which corresponds to the number of \emph{illegal items} to be loaded in the cargo. Each illegal item has a fixed value of $1$. Next, $2$ rounds of bargaining between the two players follow. At each round, the smuggler decides on a \emph{bribe} ranging from $0$ to $b \defeq 3$ (inclusive), and the sheriff must decide whether or not the cargo will be inspected given the bribe amount. The sheriff's decision is binding only in the \emph{last round} of bargaining: if the sheriff accepts the bribe, the game stops with the smuggler obtaining a utility of $n$ minus the bribe amount $b$ proposed in the last bargaining round, while the sheriff receives a utility equal to $b$. In contrast, if the sheriff does not accept the bribe in last bargaining round and decides to inspect the cargo, there are two possible alternatives:
\begin{itemize}
    \item If the cargo has no illegal items (\emph{i.e.} $n = 0$), the smuggler receives the fixed amount of $3$, while sheriff incurs a negative payoff of $-3$;
    \item Otherwise, the utility of the smuggler is set to $-2n$, while the utility of the Sheriff is $2n$.
\end{itemize}

\paragraph{Battleship} Our next benchmark is \emph{Battleship}, a parametric version of the popular board game introduced in~\citep{Farina19:Correlation}. At the beginning, each player secretly places its ships on separate locations on a grid of size $2 \times 2$. Every ship has size $1$ and a value of $4$, and the placement is such that there is no overlap with any other ship. After the placement, players take turns at ``firing'' at their opponent's ships. The game proceeds until either one player has sunk all of the opponent's ships, or each player has completed $r = 2$ rounds of firing. At the end of the game, each player's payoff is the sum of the values of the opponent's ships that were sunk, minus the sum of the values of the ships that the player has lost \emph{multiplied by two}. The latter modification makes the game general-sum, and incentivizes players to be more risk-averse.

\paragraph{Goofspiel} Our final benchmark is \emph{Goofspiel}, introduced by~\citet{Ross71:Goofspiel}. In this game every player has a hand of cards numbered from $1$ to $h$, where in our instantiation $h \defeq 3$. An additional stack of $h$ cards is shuffled and singled out as winning the current prize. In every turn a prize card is revealed, and players privately choose one of their cards to bid. The player with the highest card wins the current prize, while in case of a tie the prize card is discarded. Due to this tie-breaking mechanism, even two-player instances are general-sum. After the completion of $h$ turns, players obtain the sum of the values of the prize cards they have won. Further, the instances we consider are of \emph{limited information}---the actions of the other player are observed only at the end of the game. This makes the game strategically more involved as each player has less information about the opponent's actions. 

\end{document}